\title{Parameterized Complexity of Fair Vertex Evaluation Problems}
\author{Dušan Knop}{Department of Theoretical Computer Science, Faculty of Information Technology,\\ Czech Technical University in Prague, Prague, Czech Republic}{dusan.knop@fit.cvut.cz}{https://orcid.org/0000-0003-2588-5709}{}%
\author{Tomáš Masařík}{ Department of Applied Mathematics,
Charles University, Prague, Czech Republic\\
Faculty of Mathematics, Informatics and Mechanics, University of Warsaw, Poland
}{masarik@kam.mff.cuni.cz}{http://orcid.org/0000-0001-8524-4036}{Author was supported by SVV--2017--260452 of Charles University and by the CE-ITI grant project P202/12/G061 of GA ČR.}
\author{Tomáš Toufar}{Computer Science Institute, Charles University, Prague, Czech Republic}{toufi@iuuk.mff.cuni.cz}{}{}
\authorrunning{D. Knop, T. Masařík, and T. Toufar}%
\keywords{Fair objective, metatheorem, fair vertex cover, twin cover, modular width.}%
\newcommand{\N}[0]{	\mathbb{N}}
\newcommand{\RR}[0]{	\mathbb{R}}
\newcommand{\seq}[0]{\subseteq}
\def\phi{\varphi}
\newcommand{\nd}{\ensuremath{\mathop{\mathrm{nd}}}}
\newcommand{\tw}{\ensuremath{\mathop{\mathrm{tw}}}}
\newcommand{\cw}{\ensuremath{\mathop{\mathrm{cw}}}}
\newcommand{\cvd}{\ensuremath{\mathop{\mathrm{cvd}}}}
\newcommand{\td}{\ensuremath{\mathop{\mathrm{td}}}}
\newcommand{\sd}{\ensuremath{\mathop{\mathrm{sd}}}}
\newcommand{\fvs}{\ensuremath{\mathop{\mathrm{fvs}}}}
\newcommand{\mw}{\ensuremath{\mathop{\mathrm{mw}}}}
\newcommand{\Tab}{\mathrm{Tab}}
\newcommand{\FPT}{\ensuremath{\mathsf{FPT}}\xspace}
\newcommand{\XP}{\ensuremath{\mathsf{XP}}\xspace}
\newcommand{\W}[1]{\ensuremath{\mathsf{W[#1]}}}
\newcommand{\NP}{{\ensuremath{\mathsf{NP}}}\xspace}
\newcommand{\Pee}{{\ensuremath{\mathsf{P}}}\xspace}
\newcommand{\Ll}{{\ensuremath{\mathsf{L}}}\xspace}
\newcommand{\FO}{{$\mathsf{FO}$}\xspace}
\newcommand{\MSO}{{$\mathsf{MSO}$}\xspace}
\newcommand{\MSOt}{{$\mathsf{MSO}_2$}\xspace}
\newcommand{\MSOo}{{$\mathsf{MSO}_1$}\xspace}
\newcommand{\bigO}[1]{\ensuremath{\mathcal{O}(#1)}}
\newcommand{\MSOLCC}{\MSO\unskip\mbox{-}\nobreak\hspace{0pt}{\sf LCC}\xspace}
\newcommand{\setof}[2]{\left\{#1 : #2 \right\}}
\newcommand{\Setof}[2]{\bigl\{#1 : #2 \bigr\}}
\let\phi=\varphi
\DeclareMathOperator{\df}{:=}
\DeclareMathOperator{\poly}{{\rm poly}}
\def\checkmark{\tikz\fill[scale=0.4](0,.35) -- (.25,0) -- (1,.7) -- (.25,.15) -- cycle;}
\newcommand{\easy}[0]{\ea \checkmark}
\newcommand{\hard}[0]{\ha \checkmark}
\newcommand{\unkn}[0]{?}
\newcommand{\ea}[0]{\cellcolor{green!25}}
\newcommand{\ha}[0]{\cellcolor{red!50}}
\newcommand{\eau}[0]{\ea $\ast$}
\newcommand{\empt}[0]{\ensuremath{\mathsf{L_\emptyset}}}
\newcommand{\prob}[3]{%
\begin{center}
\bgroup
\renewcommand{\arraystretch}{0.85}%
\begin{tabularx}{\textwidth}{|lX|}
	\hline
	\multicolumn{2}{|l|}{\hspace{0pt}\rule{0pt}{13pt}#1}\\
	{\bf Input:\enspace}&{#2}\\
	{\bf Question:\enspace}&{#3\rule[-6pt]{0pt}{6pt}} \\
	\hline
\end{tabularx}
\egroup
\end{center}
}
\newcommand{\fixedProb}[4]{%
\begin{center}
\bgroup
\renewcommand{\arraystretch}{0.85}%
\begin{tabularx}{\textwidth}{|lX|}
	\hline
	\multicolumn{2}{|l|}{\hspace{0pt}\rule{0pt}{13pt}#1}\\
	{\bf Input:\enspace}&{#2}\\
	{\bf Fixed:\enspace}&{#3}\\
	{\bf Question:\enspace}&{#4\rule[-6pt]{0pt}{1pt}}\\
	\hline
\end{tabularx}
\egroup
\end{center}
}
\newtheorem{observation}[theorem]{Observation}
\let\paragraph\subparagraph
\begin{document}

\maketitle

\begin{abstract}
A prototypical graph problem is centered around a graph-theoretic property for a set of vertices and a solution to it is a set of vertices for which the desired property holds.
The task is to decide whether, in the given graph, there exists a solution of a certain quality, where we use size as a quality measure.
In this work, we are changing the measure to the fair measure [Lin\&Sahni:\ Fair edge deletion problems.\ IEEE Trans.\ Comput.\ 89].
The measure is $k$ if the number of solution neighbors does not exceed $k$ for any vertex in the graph.
One possible way to study graph problems is by defining the property in a certain logic.
For a given objective an evaluation problem is to find a set (of vertices) that simultaneously minimizes the assumed measure and satisfies an appropriate formula.

In the presented paper we show that there is an \FPT algorithm for the \textsc{\MSO Fair Vertex Evaluation} problem for formulas with one free variable parameterized by the twin cover number of the input graph.
Here, the free variable corresponds to the solution sought.
One may define an extended variant of \textsc{\MSO Fair Vertex Evaluation} for formulas with $\ell$ free variables; here we measure a maximum number of neighbors in each of the $\ell$ sets.
However, such variant is $\W{1}$-hard for parameter $\ell$ even on graphs with twin cover one.
Furthermore, we study the \textsc{Fair Vertex Cover} (\textsc{Fair VC}) problem.
\textsc{Fair VC} is among the simplest problems with respect to the demanded property (i.e., the rest forms an edgeless graph).
On the negative side, \textsc{Fair VC} is $\W{1}$-hard when parameterized by both treedepth and feedback vertex set of the input graph.
On the positive side, we provide an \FPT algorithm for the parameter modular width.
\end{abstract}

\section{Introduction}\label{sec:intro}
A prototypical graph problem is centered around a fixed property for a set of vertices.
A solution is any set of vertices for which the given property holds.
In a similar manner, one can define the solution as a set of vertices such that the given property holds when we remove this set of vertices from the input graph.
This leads to the introduction of deletion problems---a standard reformulation of some classical problems in combinatorial optimization introduced by Yannakakis~\cite{Yannakakis81}.
Formally, for a~graph property $\pi$ we formulate a~\emph{vertex deletion problem}.
That is, given a~graph $G=(V,E)$, find the smallest possible set of vertices $W$ such that ${G\setminus W}$ satisfies the property $\pi$.
Many classical problems can be formulated in this way such as \textsc{Minimum Vertex Cover} ($\pi$ describes an~edgeless graph)
or \textsc{Minimum Feedback Vertex Set} ($\pi$ is valid for forests).
Of course, the~complexity is determined by the~desired property $\pi$ but most of these problems are \NP-complete~\cite{Yannakakis78,Watanabe,KriDeo}.

Clearly, the complexity of a graph problem is governed by the associated predicate $\pi$.
We can either study one particular problem or a broader class of problems with related graph-theoretic properties.
One such relation comes from logic, for example, two properties are related if it is possible to express both by a first order (\FO) formula.
Then, it is possible to design a \emph{model checking algorithm} that for any property $\pi$ expressible in the fixed logic decides whether the input graph with the vertices from $W$ removed satisfies $\pi$ or not.

Undoubtedly, Courcelle's Theorem~\cite{cour} for graph properties expressible in the monadic second-order logic (\MSO) on graphs of bounded treewidth plays a~prime role among model checking algorithms.
In particular, Courcelle's Theorem provides for an \MSO sentence $\varphi$ an algorithm that given an $n$-vertex graph $G$ with treewidth $k$ decides whether $\varphi$ holds in $G$ in time $f(k, |\varphi|)n$, where $f$ is some computable function and $|\varphi|$ is the quantifier depth of $\varphi$.
In terms of parameterized complexity, such an algorithm is called \emph{fixed-parameter tractable} (the problem belongs to the class \FPT for the combined parameter $k + |\varphi|$).
We refer the reader to monographs~\cite{CyganFKLMPPS15,df13} for background on parameterized complexity theory and algorithm design.
There are many more \FPT model-checking algorithms, e.g., an algorithm for (existential counting) modal logic model checking on graphs of bounded treewidth~\cite{Pilipczuk11}, \MSO model checking on graphs of bounded neighborhood diversity~\cite{lam}, or \MSO model checking on graphs of bounded shrubdepth~\cite{GanianHNOM19} (generalizing the previous result).
First order logic (\FO) model checking received recently quite some attention as well and algorithms for graphs with bounded degree~\cite{Seese96}, nowhere dense graphs~\cite{GroheKS17}, and some dense graph classes~\cite{GajarskyHOLR16} were given.

Kernelization is one of the most prominent techniques for designing \FPT algorithms~\cite{Kernelization}.
It is a preprocessing routine that in polynomial time reduces the input instance to an equivalent one whose size and parameter value can be upper-bounded in terms of the input (i.e., original) parameter value.
It should be noted that kernelization is not really common for model checking algorithms.%
\footnote{We are aware of the equivalence of \FPT and kernels (cf.~\cite{CyganFKLMPPS15}), however, we would like to point out that the difference between a direct applicable set of rules and a theoretical bound is large.}
There are few notable exceptions---the result of Lampis~\cite{lam} (see Proposition~\ref{prop:lam}) was, up to our knowledge, the first result of this kind.
Lampis~\cite{lam} presented a kernel for \MSOo model checking in graphs of bounded neighborhood diversity (of exponential size in quantifier depth).
The aforementioned result has been recently followed by Gajarský and Hliněný~\cite{gajarskyH15} who showed a kernel for \MSOo model checking in graphs of bounded shrubdepth (a parameter generalizing neighborhood diversity,twin cover, and treedepth).
It is worth noting that it was possible to use the kernel of~\cite{lam} to extend his model checking algorithm for so-called fair objectives and their generalizations~\cite{tamc,KKMT}.
In this work, we continue this line of research.

\paragraph{Fair Objective.}
Fair deletion problems, introduced by Lin and Sahni~\cite{LiSah}, are such modifications of deletion problems where instead of minimizing the size of the deleted set we change the objective.
The \textsc{Fair Vertex Deletion} problem is defined as follows.
For a~given graph $G=(V,E)$ and a property $\pi$, the task is to find a~set $W\seq V$ which minimizes the maximum number of neighbors in the set $W$ over all vertices, such that the property $\pi$ holds in ${G\setminus W}$.
Intuitively, the solution should not be concentrated in a neighborhood of any vertex.
In order to classify (fair) vertex deletion problems we study the associated decision version, that is, we are interested in finding a set $W$ of size at most $k$, for a given number $k$.
Note that this can introduce only polynomial slowdown, as the value of our objective is bounded by $0$ from below and by the number of vertices of the input graph from above.
Since we are about to use a formula with a free variable $X$ to express the desired property $\pi$, we naturally use $X$ to represent the set of deleted vertices in the formula.
The \emph{fair cost of a~set} $W \subseteq V$ is defined as $\max_{v\in V}  |N(v) \cap W|$.
We refer to the function that assigns each set $W\subseteq V$ its fair cost as to the \emph{fair objective function}.
Here, we give a formal definition of the computational problem when the property $\pi$ is defined in some logic \textsf{L}.
\prob{\textsc{Fair $\mathsf{L}$ Vertex Deletion}}
{An~undirected graph $G=(V,E)$ and a~positive integer~$k$, an~\textsf{L} sentence $\varphi$.}
{Is there a~set $W \subseteq V$ of fair cost at most $k$ such that $G \setminus W \models \varphi$?}

Let $\varphi(X)$ be a formula with one free variable and let $G = (V,E)$ be a graph.
For a set $W \subseteq V$ by $\varphi(W)$ we mean that we substitute $W$ for $X$ in~$\varphi$.
The definition below can be naturally generalized to contain $\ell$ free variables.
We would like to point out one crucial difference between deletion and evaluation problems, namely that in evaluation problems we have access to the variable that represents the solution.
This enables evaluation problems to impose some conditions on the solution, e.g., we can ensure that the graph induced by the solution has diameter at most $2$ or is triangle-free.

\prob{\textsc{Fair $\mathsf{L}$ Vertex Evaluation\footnote{This problem is called \textsc{Generalized Fair $\mathsf{L}$ Vertex-Deletion} in~\cite{tamc}. However, we believe that evaluation is a more suitable expression and coincides with standard terminology in logic.}}}
{An~undirected graph $G=(V,E)$ and a positive integer~$k$, an~\textsf{L} formula $\varphi(X)$ with one free variable.}
{Is there a~set $W \subseteq V$ of fair cost at most $k$ such that $G \models \varphi(W)$?}{}

One can define {\em edge deletion problems} in a similar way as vertex deletion problems.
\prob{\textsc{Fair $\mathsf{L}$ edge deletion}}
{An~undirected graph $G=(V,E)$, an~$\mathsf{L}$ sentence $\varphi$, and a~positive integer~$k$.}
{Is there a~set $F \subseteq E$ such that $G \setminus F \models \varphi$ and for every vertex $v$ of $G$, it holds that $\left|\left\{ e\in F \colon e\ni v \right\}\right| \leq k$?}
The evaluation variant is analogical.
Recall, in dense graph classes one cannot obtain an \MSOt model checking algorithm running in \FPT-time~\cite{Lampis:13}. That is a reason why evaluation problems does not make sense in this context.
In sparse graph classes, this problem was studied in~\cite{tamc} where \W{1}-hardness was obtained for Fair \FO Edge Deletion on graphs of bounded treedepth and \FPT algorithm was derived for Fair \MSOt Edge Evaluation on graphs of bounded vertex cover.

Minimizing the~fair cost arises naturally for edge problems in many situations as well, e.g., in defective coloring~\cite{defcol}, which has been substantialy studied from the practical network communication point of view~\cite{defectiPractical1,defectiPractical2}.
A~graph $G=(V,E)$ is $(k,d)$-colorable if every vertex can be assigned a~color from the~set $[k]$ in such a~way that every vertex has at most $d$ neighbors of the same color.
This problem can be reformulated in terms of fair deletion; we aim to find a~set of edges $F$ such that graph $G'=(V,F)$ has maximum degree $d$ and $G\setminus F$ can be partitioned into $k$ independent sets.

\paragraph{Related results.}
There are several possible research directions once a model checking algorithm is known.
One possibility is to broaden the result either by enlarging the class of graphs it works for or by extending the expressive power of the concerned logic, e.g., by introducing a new predicate~\cite{kontienN11}.
Another obvious possibility is to find the exact complexity of the general model checking problem by providing better algorithms (e.g., for subclasses~\cite{lam}) and/or lower-bounds for the problem~\cite{frickG04,Lampis:13}.
Finally, one may instead of deciding a sentence turn attention to finding a set satisfying a formula with a free variable, that is, optimal with respect to some objective function~\cite{ALS:91,CourcelleMosbah,GO:13}.
In this work, we take the last presented approach---extending a previous work on \MSO model checking for the fair objective function.

When extending a model checking result to incorporate an objective function or a predicate, we face two substantial difficulties.
On the one hand, we are trying to introduce as strong extension as possible, while on the other, we try not to worsen the running time too much.
Of course, these two are in a clash.
One evident possibility is to sacrifice the running time and obtain an \XP algorithm, that is, an algorithm running in time $f(k)|G|^{g(k)}$.
As an example there is an~\XP algorithm for the \textsc{Fair \MSOt Vertex Evaluation} problems parameterized by the treewidth (and the formula) by Kolman et al.~\cite{Kolman09onfair} running in time $f(|\varphi|,\tw(G))|G|^{g(\tw(G))}$.
An orthogonal extension is due to Szeider~\cite{Szeider:11} for the so-called local cardinality constraints (\MSOLCC) who gave an \XP algorithm parameterized by treewidth.
If we decide to keep the \FPT running time, such result is not possible for treedepth (consequently for treewidth) as we give \W{1}-hardness result for a very basic \textsc{Fair $\Ll_\emptyset$ Vertex Deletion} problem\footnote{Here, $\Ll_\emptyset$ stands for any logic that can express an edgeless graph.} in this paper.
A weaker form of this hardness was already known for \FO logic~\cite{tamc}.
However, Ganian and Obdržálek~\cite{GO:13} study \textsf{CardMSO} and provide an \FPT algorithm parameterized by neighborhood diversity.
Recently, Masařík and Toufar~\cite{tamc} examined fair objective and provide an \FPT algorithm for the \textsc{Fair \MSOo Vertex Evaluation} problem parameterized by neighborhood diversity.
See also~\cite{KKMT} for a comprehensive discussion of various extensions of the \MSO.

Since classical Courcelle's theorem~\cite{cour} have an exponential tower dependence on the quantifier depth of the \MSO formula, it could be interesting to find a more efficient algorithm.
However, Frick and Grohe~\cite{frickG04} proved that this dependence is unavoidable, unless $\Pee=\NP$.
On the other hand, there are classes where \MSO model checking can be done in asymptotically faster time, e.g., the neighborhood diversity admits a single-exponential dependence on the quantifier depth of the formula~\cite{lam}.
Courcelle's theorem proliferated into many fields.
Originating among automata theorists, it has since been reinterpreted in terms of finite model theory~\cite{LibkinFMT}, database programming~\cite{GPW:07}, game theory~\cite{KLR:11}, and linear programming~\cite{KolmanKT:2015}.
For a recent survey of algorithmic metatheorems consult the article by Grohe and Kreutzer~\cite{grohe2011methods}.

We want to turn a particular attention to the \textsc{Fair Vertex Cover} (\textsc{Fair VC}) problem which, besides its natural connection with \textsc{Vertex Cover}, has some further interesting properties.
For example, we can think about classical vertex cover as having several crossroads (vertices) and roads (edges) that we need to monitor.
However, people could get uneasy if they will see too many policemen from one single crossroad.
In contrast, if the vertex cover has low fair cost, it covers all roads while keeping a low number of policemen in the neighborhood of every single crossroad.

\subsection{Our Results}
We give a~metatheorem for graphs having bounded twin cover.
Twin cover (introduced by Ganian~\cite{iwpec/Ganian11}; see also~\cite{Ganian15}) is one possible generalization of the vertex cover number.
Here, we measure the number of vertices needed to cover all edges that are not twin-edges; an edge $\{u,v\}$ is a \emph{twin-edge} if $N(u)\setminus\{v\} = N(v)\setminus\{u\}$.
Ganian introduced twin cover in the hope that it should be possible to extend algorithms designed for parameterization by the vertex cover number to a broader class of graphs.
\begin{theorem}\label{thm:fairtc}
The~\textsc{Fair \MSOo Vertex Evaluation} problem parameterized by the twin cover number and the quantifier depth of the formula admits an~\FPT algorithm.
\end{theorem}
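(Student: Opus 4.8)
The plan is to exploit the structure of graphs of bounded twin cover. First I would compute a twin cover $U$ of $G$ with $|U| = t := \tc(G)$ in \FPT time by Ganian's algorithm~\cite{iwpec/Ganian11}. Then $G - U$ is a disjoint union of cliques, and every clique $C$ of $G - U$ is completely joined to the set $\mathrm{type}(C) := N(v) \cap U$ (which is the same for all $v \in C$, since edges inside $G-U$ are twin-edges) and has no neighbours outside $C \cup \mathrm{type}(C)$. I group the cliques of $G - U$ by their type; there are at most $2^t$ types.

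The algorithm then branches over the $2^t$ candidates for $W_U := W \cap U$. Having fixed $W_U$, the fair-cost requirement decomposes into purely local conditions. Every vertex of a clique $C$ of type $S$ has exactly $|S \cap W_U|$ neighbours in $W_U$, so, writing $k_S := k - |S \cap W_U|$ and $w_C := |W \cap C|$, the fair-cost constraint at the vertices of $C$ is equivalent to: $w_C \le k_S$ if $w_C < |C|$, and $|C| \le k_S + 1$ if $w_C = |C|$. A vertex $u \in U$ has $|N(u) \cap W_U|$ neighbours in $W_U$ and, for every type $S \ni u$, all of $\bigcup_{\mathrm{type}(C) = S}(W \cap C)$ as its remaining neighbours, which gives the global constraint $\sum_{S \ni u} \sigma_S \le k - |N(u) \cap W_U|$, where $\sigma_S := \sum_{\mathrm{type}(C) = S} w_C$. (Branches in which some budget is negative are discarded.) Thus, once $W_U$ is fixed, the only remaining freedom is the choice of the integers $w_C$, constrained as above and by the requirement $G \models \varphi(W)$.

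The crucial point is that $\varphi(W)$ depends on $W$ only through bounded information. Choosing $W$ splits each clique $C$ of type $S$ into the two cliques $C \cap W$ and $C \setminus W$, both completely joined to $S$ and to each other. A Feferman--Vaught-style composition argument (an Ehrenfeucht--Fraïssé game) in the spirit of Lampis's kernelization for neighbourhood diversity (Proposition~\ref{prop:lam}; see also the kernel for bounded shrubdepth in~\cite{gajarskyH15}) shows that, for quantifier depth $q$, the \MSOo-$q$-type of the coloured graph $(G, W)$ is determined by its \emph{profile}: the isomorphism type of $G[U]$ together with $W_U$, and, for every type $S$, the number --- truncated at a threshold $F(q)$ --- of cliques whose pair $(|C \cap W|, |C \setminus W|)$ lies in each size class of $\{0, 1, \dots, F(q), \top\}^2$, where $\top$ stands for ``at least $F(q)$''. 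The number of profiles is bounded by a function $g(t, q)$, and for each profile one decides whether $\varphi$ holds by running \MSOo model checking on the associated bounded-size representative graph. The algorithm branches over all profiles $P$ for which $\varphi$ holds.

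It remains, for a fixed $W_U$ and profile $P$, to decide whether the $w_C$ can be chosen so as to realize $P$ while meeting the fair-cost budgets; this is the technical heart of the argument, and the main obstacle, because $k$ is part of the input rather than the parameter, so the budgets $k_S$ may be arbitrarily large. The way around it is that only boundedly many clique sizes are relevant: the admissible set for $w_C$ is $\{0, \dots, |C|\}$ when $|C| \le k_S + 1$ and $\{0, \dots, k_S\}$ otherwise, while the profile distinguishes $|C \cap W|$ and $|C \setminus W|$ only up to $F(q)$. Using this, I would bucket the cliques of each type $S$ into $2^t \cdot F(q)^{O(1)}$ classes that are interchangeable both for the profile and for the budget inequalities --- small cliques by exact size, large cliques by whether they fall into one of the windows $[\, i + F(q),\, k_S + i \,]$ responsible for producing a part of small size $i < F(q)$ --- so that the contribution of one clique of a fixed class to the profile and to $\sigma_S$ ranges over only boundedly many possibilities. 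Deciding realizability of $P$ under the budget inequalities then becomes an integer linear program whose number of variables is bounded in terms of $t$ and $q$, which I would solve with the standard algorithm for integer programming in fixed dimension (see, e.g.,~\cite{CyganFKLMPPS15}). Combining the three nested levels of branching with this subroutine yields the desired \FPT algorithm.
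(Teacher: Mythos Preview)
Your overall framework is essentially the paper's: branch on $W\cap U$, define a \emph{profile} that records for every cover-subset $S$ the truncated multiset of pairs $(|C\cap W|,|C\setminus W|)$ over the cliques $C$ of type $S$ (the paper calls this the \emph{shape}), enumerate the boundedly many profiles, evaluate $\varphi$ on a bounded representative for each, and for every satisfying profile test realisability under the fair-cost budgets. The model-checking half of your sketch matches Proposition~\ref{prop:lam} and Lemma~\ref{lem:induction} and is fine.

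The gap is in the realisability step. You assert that after bucketing the cliques of a type $S$ into $2^t\cdot F(q)^{O(1)}$ classes, ``the contribution of one clique of a fixed class to $\sigma_S$ ranges over only boundedly many possibilities,'' so that the whole task becomes a bounded-dimension ILP. This is false. A clique $C$ placed in a profile cell $(\top,j)$ --- meaning $|C\cap W|\ge F(q)$ and $|C\setminus W|=j<F(q)$ --- contributes $w_C=|C|-j$ to $\sigma_S$; for cliques in your ``all windows admissible'' bucket this value sweeps essentially all of $[2F(q),\,k_S]$, which is unbounded in the parameter. Hence the budget inequality $\sum_{S\ni u}\sigma_S\le k-|N(u)\cap W_U|$ is \emph{not} linear in your count variables: the coefficient attached to ``number of big-bucket cliques put into cell $(\top,j)$'' depends on \emph{which} big cliques are actually used, information your bucketing discards.

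The paper hits exactly this obstacle and resolves it in Lemma~\ref{lem:computeX} not by a pure ILP but by minimising a \emph{separable convex} objective over the integer points of a bounded-dimension polytope (Theorem~\ref{thm:KhachianPorkolab}). The point is that once the total number $y$ of large cliques assigned to the last row is fixed, the cheapest choice is always the $y$ smallest such cliques, so their total contribution is the prefix sum $g'(y)=\sum_{i\le y}|C_{(i)}|$ of the sorted sizes --- a univariate convex function of $y$. (Alternatively one linearises $g'$ with $n$ tangent constraints and invokes Lenstra's algorithm; see Section~\ref{sec:tcFairPolySpace}.) This convexity trick is precisely the missing ingredient in your sketch; with it in place, your argument and the paper's coincide.
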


We want to point out here that in order to obtain this result we have to reprove the original result of Ganian~\cite{iwpec/Ganian11} for \MSOo model checking on graphs of bounded twin cover.
For this, we extend arguments given by Lampis~\cite{Lampis:13} in the design of an \FPT algorithm for \MSOo model checking on graphs of bounded neighborhood diversity.
We do this to obtain better insight into the structure of the graph (a kernel) on which model checking is performed (its size is bounded by a function of the parameter).
This, in turn, allows us to find a solution minimizing the fair cost and satisfying the \MSOo formula.
The~result by Ganian in version~\cite{iwpec/Ganian11} is based on the~fact that graphs of bounded twin cover have bounded shrubdepth and so \MSOo model checking algorithm on shrubdepth (\cite{GanianHNOM19,gajarskyH15}) can be used.
It is worth mentioning that our model checking algorithm (Proposition~\ref{prop:mctc}) achieves better running time than the direct application of the result of Gajarský and Hliněný~\cite{gajarskyH15}.
This is because it is tailored for twin cover number (a proper subclass of shrubdepth~$2$ graphs).

When proving hardness results it is convenient to show the hardness result for a concrete problem that is expressible by an \MSOo formula, yet as simple as possible.
Therefore, we introduce a key problem for Fair Vertex Deletion---the \textsc{Fair VC} problem.
\prob{\textsc{Fair Vertex Cover (Fair VC)}}
{An~undirected graph $G=(V,E)$, and a~positive integer~$k$.}
{Is there a~set $W \subseteq V$ of fair cost at most $k$ such that $G \setminus W$ is an~independent set?}{}
{Fair VC} problem can be expressed in any logic that can express an edgeless graph (we denote such logic $\Ll_\emptyset$) which is way weaker even than \FO.
Therefore, we propose a systematic study of \textsc{Fair VC} problem which, up to our knowledge, have not been considered before.

\begin{theorem}\label{thm:fairVCisWHwrtTD+FVS}
The~\textsc{Fair VC} problem parameterized by treedepth $\td(G)$ and feedback vertex set $\fvs(G)$ combined is $\W{1}$-hard.
Moreover, if there exists an~algorithm running in time $f(w)n^{o(\sqrt{w})}$, where $w\df \td(G)+\fvs(G)$, then Exponential Time Hypothesis fails.
\end{theorem}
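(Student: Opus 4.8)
\emph{Proof strategy.} I would give a polynomial-time (hence parameterized) reduction from \textsc{Multicolored Clique}. Given an instance $H$ with color classes $V_1,\dots,V_k$, each of size $n$ with $V_i=\{v^i_1,\dots,v^i_n\}$, the plan is to build a graph $G$ and a bound $k'$ such that $G$ has a vertex cover of fair cost at most $k'$ iff $H$ contains a clique with exactly one vertex in each class, while $w:=\td(G)+\fvs(G)=\bigO{k^2}$ and $|V(G)|=\poly(n,k)$. Since \textsc{Multicolored Clique} is $\W{1}$-hard, such a reduction already proves the $\W{1}$-hardness of \textsc{Fair VC} for the combined parameter; and since \textsc{Multicolored Clique} has, under ETH, no algorithm of running time $f(k)\,n^{o(k)}$, and here $\sqrt w=\Theta(k)$, composing the reduction with a hypothetical $f(w)\,n^{o(\sqrt w)}$ algorithm for \textsc{Fair VC} would refute ETH. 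Everything thus reduces to exhibiting the reduction.

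I would use the edge-selection formulation of \textsc{Multicolored Clique}: a multicolored clique is the same as a choice of an edge $e_{ij}\in E_{ij}:=E(H)\cap(V_i\times V_j)$ for each pair $\{i,j\}$ such that, for every color $i$, all the chosen edges $e_{ij}$ ($j\ne i$) share the same endpoint in $V_i$. The graph $G$ would consist of a \emph{core} of $\bigO{k^2}$ vertices, with a pool of degree-one and degree-two gadget vertices hung off it so that deleting the core leaves a disjoint union of stars; this gives $\fvs(G)\le|\mathrm{core}|$ and $\td(G)\le|\mathrm{core}|+2$, which is the whole source of the parameter bound. The core would contain, for each pair $\{i,j\}$, a \emph{pair vertex} $h_{ij}$ adjacent to one \emph{candidate} $t^{ij}_e$ per $e\in E_{ij}$, and, for each color $i$, a cyclically arranged chain of $k-1$ \emph{comparison} vertices.

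Reading ``$t^{ij}_e\notin W$'' as ``$e$ is selected'', and using that $W$ is a vertex cover iff $V\setminus W$ is independent and that $|N(x)\cap W|=\deg(x)-|N(x)\setminus W|$, I would calibrate all degrees --- padding the relevant core vertices with forced-into-$W$ pendant stars, since a vertex carrying $k'+1$ private leaves must lie in $W$ --- so that: (i) the fair-cost constraint at each $h_{ij}$ forces at least one candidate of $E_{ij}$ to be selected; and (ii) each comparison vertex of color $i$ is satisfiable exactly when the $V_i$-endpoint index of the edge selected in one pair is at most the one selected in the next pair of the cycle. Point (ii) would be realized by the standard two-sided (``$a$ versus $n-a$'') encoding: a candidate whose $V_i$-endpoint is $v^i_a$ gets $a$ private degree-one neighbors on the comparison vertex where its pair plays the ``lower'' role and $n-a$ such neighbors on the one where it plays the ``upper'' role, and selecting the candidate forces all of these into $W$.

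The point of the cyclic arrangement is that summing the comparison constraints of a color $i$ around its cycle makes the total forced load equal to $n\cdot\sum_m|T_{ij_m}|$, where $T_{ij_m}$ is the selected subset of pair $\{i,j_m\}$; together with (i) this forces exactly one selected edge per pair, and then the individual comparisons force $a_{ij_1}\le a_{ij_2}\le\dots\le a_{ij_{k-1}}\le a_{ij_1}$, i.e.\ a common $V_i$-endpoint for each color, which is precisely a multicolored clique. Conversely, from a multicolored clique the obvious selection makes every vertex meet its (then tight) budget. I expect the routine part of the write-up to be exactly this equivalence check, plus fixing $k'$ (a concrete polynomial in $n$, large enough to dominate every local degree yet below $|V(G)|$ so the instance is nontrivial) and the exact pendant counts at the core vertices. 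The main obstacle will be designing the comparison gadget so the cyclic sum telescopes as claimed --- in particular, ensuring that over-selecting edges in a pair can never help satisfy any constraint, only increase load, which is exactly what the two-sided accounting buys --- while keeping each gadget's core contribution to $\bigO{1}$ vertices per color-pair, so that $w=\bigO{k^2}$ and the lower-bound exponent $\sqrt w$ is $\Theta(k)$ rather than a larger power of $k$.
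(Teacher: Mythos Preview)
Your proposal is correct and follows the same blueprint as the paper's proof: reduce from \textsc{Multicolored Clique}, read ``$v\notin W$'' as ``$v$ is selected'', use pendant stars both to force core vertices into $W$ and to locally lower the fair budget, encode indices via the two-sided $a$/$(n-a)$ split carried by degree-$2$ intermediate vertices, and keep a core of $\Theta(k^2)$ designated vertices whose removal leaves a depth-$2$ forest (yielding $\td,\fvs=O(k^2)$ and hence the stated ETH bound via the $n^{o(k)}$ lower bound for \textsc{Multicolored Clique}).

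The one architectural difference worth recording is how consistency within a color class is enforced. The paper keeps \emph{both} vertex-selection gadgets (one per color $a$) and edge-selection gadgets (one per pair $\{a,b\}$), and for each ordered pair $(a,b)$ introduces two check vertices $c^1_{ab},c^2_{ab}$ whose budgets of $n$ give the two inequalities $i\le i'$ and $i'\le i$ between the vertex selected in $V_a$ and the $a$-endpoint of the edge selected in $E_{\{a,b\}}$; a sum over $c^1_{ab},c^2_{ab}$ then rules out over-selection. You instead dispense with vertex selection entirely and, for each color $i$, compare the $V_i$-endpoints of the selected edges pairwise around a cycle of $k-1$ comparison vertices, using the telescoping sum (each selected candidate contributes exactly $n$ to the total load of that cycle) together with the per-pair ``at least one'' constraint to force exactly one selection per pair, after which the individual cyclic inequalities collapse to equality. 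Both devices are standard for \textsc{Multicolored Clique} reductions and both land at a core of size $\Theta(k^2)$; your variant is slightly leaner (no vertex gadgets), while the paper's variant localizes each equality check to a single pair of gadgets, which makes the ``exactly one selected'' step marginally more direct.
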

Note that this immediately yields $\W{1}$-hardness and $f(w)n^{o(\sqrt{w})}$ lower bound for \textsc{Fair $\Ll_\emptyset$ Vertex Evaluation}.
Previously, an $f(w)n^{o(w^{1/3})}$ lower bound was given for \FO logic by Masařík and Toufar~\cite{tamc}.
Thus our result is stronger in both directions, i.e., the lower bound is stronger, and the logic is less powerful.
On the other hand, we show that \textsc{Fair VC} can be solved efficiently in dense graph models.
\begin{theorem}\label{thm:fairVC}
The \textsc{Fair VC} problem parameterized by modular width $\mw(G)$ admits an~\FPT algorithm with running time $2^{\mw(G)}\mw(G) n^3$, where $n$ is the number of vertices in $G$.
\end{theorem}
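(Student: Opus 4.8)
The plan is to run a bottom-up dynamic program along the modular decomposition tree of $G$. First I would compute this tree; it has $\bigO{n}$ nodes, and every internal node is associated with a quotient graph $H$ on some $t\le\mw(G)$ vertices together with child modules $M_1,\dots,M_t$, so that the graph induced on the node's module is obtained from $H$ by substituting $G[V_{M_i}]$ for the $i$-th vertex of $H$ (a full join between $V_{M_i}$ and $V_{M_j}$ when $ij\in E(H)$, and no edges otherwise). For a tree node $M$ with vertex set $V_M$ the table stores, for every $s\in\{0,\dots,|V_M|\}$,
\[
  B_M[s]\;=\;\min\Bigl\{\,\max_{v\in V_M}\bigl|N_{G[V_M]}(v)\cap W\bigr| \;:\; W\subseteq V_M,\ |W|=s,\ V_M\setminus W \text{ is independent in } G[V_M]\,\Bigr\},
\]
and $B_M[s]=\infty$ if there is no such $W$. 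The reason this is the right quantity is the module property: every vertex outside $V_M$ sees either all of $V_M$ or none of it, so if $W$ is a global solution and $W_M=W\cap V_M$, then for $v\in V_M$ the value $|N_G(v)\cap W|$ equals $|N_{G[V_M]}(v)\cap W_M|$ plus a term that is the same for all $v\in V_M$ (the external load). Hence, for everything outside $M$, only the size and the internal fair cost of $W_M$ matter, and a solution has fair cost $\le k$ if and only if at every node the internal fair cost does not exceed $k$ minus the external load there.

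For a leaf (a single vertex) we have $B_M[0]=B_M[1]=0$. For an internal node with quotient $H$ and children $M_1,\dots,M_t$, note that $V_M\setminus W$ is independent in $G[V_M]$ exactly when (i) $V_{M_i}\setminus W_{M_i}$ is independent in each child and (ii) the set $F=\{i : W_{M_i}\neq V_{M_i}\}$ of children with a nonempty remainder is an independent set of $H$ (two fully surviving sides joined by an edge of $H$ would otherwise span an edge). So the algorithm enumerates all independent sets $F$ of $H$ — at most $2^{\mw(G)}$ of them — forcing $W_{M_i}=V_{M_i}$ for $i\notin F$ and leaving $|W_{M_i}|=s_i\le|V_{M_i}|-1$ free for $i\in F$. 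For such a choice, for $v\in V_{M_i}$ one gets $|N_{G[V_M]}(v)\cap W_M| = |N_{G[V_{M_i}]}(v)\cap W_{M_i}| + \sum_{j:ij\in E(H)}s_j$, whence the internal fair cost of $M$ equals $\max_i\bigl(B_{M_i}[s_i]+\sum_{j:ij\in E(H)}s_j\bigr)$. For each fixed $F$ the algorithm then computes, for every target total size $s$, the minimum of this expression over all admissible $(s_i)$ with $\sum_i s_i=s$, by a $(\min,\max)$-style convolution that incorporates the children one at a time; this should cost $\bigO{\mw(G)\,n^2}$ per set $F$, i.e. $\bigO{2^{\mw(G)}\mw(G)\,n^2}$ per tree node and $\bigO{2^{\mw(G)}\mw(G)\,n^3}$ overall. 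At the root the external load is $0$, so the instance is a yes-instance if and only if $\min_s B_{\mathrm{root}}[s]\le k$.

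The delicate part — and the technical heart of the argument — is exactly the convolution at an internal node. Because $F$ is independent in $H$, for a child $i\in F$ the cross term $\sum_{j:ij\in E(H)}s_j$ involves only the (already fixed) sizes of fully deleted children, so it is a harmless constant shift of the table $B_{M_i}$. The genuine obstacle is a fully deleted child $i\notin F$: its contribution $B_{M_i}[|V_{M_i}|]+\sum_{j\in F,\ ij\in E(H)}s_j$ couples several of the free variables $s_j$ at once, and a naive treatment would need intermediate tables indexed by a whole vector of partial sums. The proof has to organize the convolution — choosing a good order in which to merge the children of $F$, and keeping, for each deleted child, only the bounded amount of information needed to test its load against the threshold $k$ — so that the intermediate tables stay of size $\bigO{n}$ and the per-node cost remains $\bigO{2^{\mw(G)}\mw(G)\,n^2}$. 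Once the recurrence is set up, correctness is a routine induction on the decomposition tree using the module property above.
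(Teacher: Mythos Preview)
Your overall framework matches the paper's: a bottom-up dynamic program on the modular decomposition tree, and at each internal node an enumeration of the ``types'' of a vertex cover via the quotient graph (your independent sets $F$ are exactly the complements of the paper's vertex covers $C_T$ of the template). The structural observations you make about how the fair cost splits into an internal part plus an external load are also correct.

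The gap is precisely where you say the ``technical heart'' is, and you do not actually close it. By indexing your table by the \emph{size} $s$ and storing the minimum fair cost, the per-node subproblem becomes: for fixed $F$, minimise over $(s_j)_{j\in F}$ with a prescribed total the maximum of (a) the decoupled terms $B_{M_j}[s_j]+D_j$ for $j\in F$ and (b) the coupled terms $\Delta(G[M_i])+E_i+\sum_{j\in F\cap N_H(i)} s_j$ for $i\notin F$. Your $(\min,\max)$ convolution handles (a) in $O(\mw(G)\,n^2)$, but the terms in (b) simultaneously tie together arbitrary subsets of the free variables $s_j$; a one-child-at-a-time merge would have to carry, for every $i\notin F$, the running partial sum over its neighbours in $F$, which is a vector of up to $\mw(G)$ numbers in $[n]$. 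Your sentence about ``choosing a good order'' and ``keeping only bounded information'' is exactly what would need to be proved, and there is no such ordering in general (the neighbourhoods $N_H(i)\cap F$ need not be intervals under any ordering of $F$). Nor can you fall back on the global target $k$: the table $B_M$ is supposed to hold the minimum internal fair cost for every size, so no single threshold is available during the recursion.

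The paper sidesteps this entirely by swapping the roles of the two quantities: its table is indexed by the \emph{fair-cost budget} $p$ and stores the minimum size of a cover achieving it. Once $p$ is fixed, for every $j\in F$ the budget constraint reads $f_j\le p-\sum_{i:ij\in E(H)} n_i$, a constant depending only on $j$; one sets $p_j$ to that value, looks up the minimum size $c_j=\Tab_j[p_j]$, and then the coupled constraints (your type-(b) terms) become a straight inequality check, since all the $c_j$ are now determined. No convolution is needed, and the per-entry work is $O(\mw(G)\cdot n)$. If you want to keep your table layout, the clean fix is to compute the paper's fair-cost-indexed table at each node and recover $B_M$ from it, rather than attempting the coupled convolution directly.
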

We point out that the \textsc{Fair VC} problem is (rather trivially) AND-compositional and thus it does not admit a polynomial kernel for parameterization by modular width.

\begin{lemma}\label{lem:FairVCpolykernelMW}
The \textsc{Fair VC} problem parameterized by the modular width of the input graph does not admit a polynomial kernel, unless $\NP\subseteq\mathsf{coNP/}\poly$.
\end{lemma}
Moreover, an analog to Theorem~\ref{thm:fairVC} cannot hold for parameterization by shrubdepth of the input graph.
This is a consequence of Theorem~\ref{thm:fairVCisWHwrtTD+FVS} and \cite[Proposition~3.4]{GanianHNOM19}.

Another limitation in a rush for extensions of Theorem~\ref{thm:fairtc} is given when aiming for more free variables.
Formally, we stydy the following problem.
\prob{\textsc{$\ell$-Fair $\mathsf{L}$ Vertex Evaluation}}
{An~undirected graph $G=(V,E)$ and a positive integer~$k$, an~\textsf{L} formula $\varphi(X_1,\ldots,X_\ell)$ with $\ell$ free variables.}
{Are there sets $W_1,\ldots,W_\ell \subseteq V$ such that $G \models \varphi(W_1,\ldots,W_\ell)$ having fair cost at most $k$?}{}
The \emph{fair cost} of $W_1,\ldots,W_\ell$ is defined as $\max_{v\in V}\max_{i\in [\ell]} \left| N(v) \cap W_i \right|$.
It is very surprising that such a generalization is not possible for parameterization by twin cover, since the same extension is possible for parameterization by neighborhood diversity~\cite{KKMT}.
In fact, they prove something even stronger, i.e., an \FPT algorithm parameterized by neighborhood diversity in the context of $\mathsf{MSO_{lin}^L}$ is given in~\cite{KKMT}.
In $\mathsf{MSO_{lin}^L}$ one can specify both lower- and upper-bound for each vertex and each free variable (i.e., a feasibility interval is given for every vertex).
\begin{theorem}\label{thm:lFairVEisWH}
The \textsc{$\ell$-Fair \FO Vertex Evaluation} problem is \W{1}-hard for parameter $\ell$ even on graphs with twin cover of size one.
Moreover, if there exists an algorithm running in time $f(\ell)n^{O(\ell / \log\ell)}$, then Exponential Time Hypothesis fails.
\end{theorem}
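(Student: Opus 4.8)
The plan is a parameterized reduction from \textsc{Partitioned Subgraph Isomorphism}: given a pattern graph $H$ with $k$ edges together with a host graph $G'$ whose vertex set is partitioned into classes $U_h$ indexed by $h\in V(H)$, decide whether there is an injective homomorphism $\psi$ from $H$ to $G'$ with $\psi(h)\in U_h$ for every $h$. This problem is $\W{1}$-hard with parameter $k$ (it contains \textsc{Clique}), and, by a classical lower bound of Marx, it admits no algorithm running in time $f(k)\cdot|V(G')|^{o(k/\log k)}$ unless the Exponential Time Hypothesis fails. The reduction will output an instance of \textsc{$\ell$-Fair \FO Vertex Evaluation} with $\ell=\Theta(k)$ free set variables, on a graph with twin cover one and with $\mathrm{poly}(|V(G')|)$ vertices; the two stated consequences then follow by substituting these bounds.

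The graph $\widehat G$ is a single apex vertex $s$ together with a disjoint union of cliques, each clique made either fully adjacent or fully non-adjacent to $s$; thus $\{s\}$ is a twin cover of $\widehat G$ of size one. Because all vertices inside one clique component are mutual twins of $\widehat G$, the only information about the restriction of a solution set $W_m$ to such a clique $C$ that is visible either to an \FO formula or to the fair-cost function is, essentially, the cardinality $|W_m\cap C|$ (apart from the trivial ``empty''/``full''/``singleton'' distinctions); moreover the fair cost of the apex $s$ with respect to $W_m$ equals, up to lower-order corrections, $\sum_{C\text{ adjacent to }s}|W_m\cap C|$. The construction exploits exactly these two handles, together with the fact that here the fair-cost bound $k$ is \emph{not} the parameter, so we may take it polynomially large. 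For every pattern vertex $h$ we add a \emph{selection clique} $C_h$ of size $|U_h|$, identifying $U_h$ with $\{1,\dots,|U_h|\}$, so that the intended encoding of $\psi(h)$ is the cardinality $|W\cap C_h|$; for every pattern edge $e=\{h,h'\}$ we add a small family of \emph{checker cliques} (some adjacent to $s$), whose purpose is described next.

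Using one designated set variable for the selections and $\Theta(1)$ set variables per pattern edge --- so $\ell=\Theta(k)$ in total --- and a plain \FO formula whose size depends only on $k$, we enforce three things. First, each $C_h$ carries a cardinality in the admissible range, i.e.\ a genuine element of $U_h$; this is expressible in \FO (non-emptiness of the relevant intersection), and the bound $|W\cap C_h|\le|U_h|$ is automatic. Second, the edge variables and checker cliques attached to $e=\{h,h'\}$ force the cardinalities $|W\cap C_h|$ and $|W\cap C_{h'}|$ to be the two endpoints of some actual edge of $G'$ between $U_h$ and $U_{h'}$; this is the only step requiring genuine work, and it is carried out by arithmetizing the bipartite adjacency $E(G'[U_h,U_{h'}])$ through a careful assignment of integer weights to clique vertices so that ``is an edge'' becomes an additive (fair-cost) identity at the apex-adjacent checker cliques, rather than a lookup in an arbitrary relation. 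Third, \FO pins down the global \emph{shape} of the solution (which variable is allowed to meet which clique, and that certain intersections are complements of others inside a clique), since the fair-cost bound alone, being large, does not constrain the shape. One then verifies that a partitioned subgraph embedding $\psi$ of $H$ into $G'$ yields a feasible tuple $(W_1,\dots,W_\ell)$ of fair cost at most $k$ satisfying the formula, and conversely any such tuple has, by construction of the checker gadgets, cardinalities $|W\cap C_h|$ that spell out such an embedding.

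The main obstacle is the second step: a twin-cover-one graph exposes, through \FO together with fair-cost counting, essentially nothing beyond the multiset of intersection sizes with each clique component (every remaining vertex being a twin), so the arbitrary adjacency relations of $G'$ have to be encoded purely additively; simultaneously this must be done with only $O(1)$ checker cliques and $O(1)$ set variables \emph{per pattern edge}, for otherwise $\ell$ would depend on $|V(G')|$ and the map would not be a parameterized reduction. Once this per-edge gadget is in place, $\W{1}$-hardness is immediate, and, since $\ell=\Theta(k)$ and $\widehat G$ has $\mathrm{poly}(|V(G')|)$ vertices, an algorithm for \textsc{$\ell$-Fair \FO Vertex Evaluation} running in time $f(\ell)\,n^{o(\ell/\log\ell)}$ would solve \textsc{Partitioned Subgraph Isomorphism} in time $f'(k)\,|V(G')|^{o(k/\log k)}$, contradicting the Exponential Time Hypothesis.
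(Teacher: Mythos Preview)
Your approach differs substantially from the paper's, and the central step is not actually carried out. The paper reduces from \textsc{Unary $\ell$-Bin Packing} (\W{1}-hard in $\ell$, with the matching ETH lower bound due to Jansen et al.): the graph is a single universal vertex $u$ together with one clique of size $s_i$ per item; the formula $\varphi(X_1,\dots,X_\ell)$ merely says that every non-apex vertex lies in some $X_j$ and that each clique is, for every $j$, either contained in or disjoint from $X_j$. Then the fair cost of $u$ with respect to $X_j$ is exactly $\sum_{i:\sigma(i)=j} s_i$, so ``fair cost $\le B$'' is literally the bin-capacity constraint. No gadgetry is needed because the source problem already has purely additive constraints, which is precisely what the fair objective measures at the apex.

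Your reduction from \textsc{Partitioned Subgraph Isomorphism} instead requires checking, for each pattern edge, that a pair of cardinalities $(|W\cap C_h|,|W\cap C_{h'}|)$ lies in an \emph{arbitrary} bipartite relation (the host adjacency). You correctly identify this as ``the main obstacle'' and ``the only step requiring genuine work'', and then dispatch it in one sentence about ``arithmetizing the bipartite adjacency through a careful assignment of integer weights''. That is a promissory note, not a construction. Concretely: in a twin-cover-one graph the only non-trivial fair-cost constraint is at the apex, and it is an \emph{upper bound on a sum}; all clique vertices are mutual twins, so a bounded-size \FO formula cannot address individual checker cliques or individual vertices within them. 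Turning a collection of sum-upper-bounds plus such a formula into a membership test for an arbitrary relation on $[n]\times[n]$ using only $O(1)$ set variables per pattern edge is far from obvious, and you give no hint of how the weights are chosen, how equalities (rather than inequalities) are forced, or how the formula tells the checker cliques apart. Your aside that one needs ``$O(1)$ checker cliques \ldots\ for otherwise $\ell$ would depend on $|V(G')|$'' is also off: the number of cliques affects only $|\widehat G|$, not $\ell$; what must stay bounded is the number of free set variables and the formula length.

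In short, the paper sidesteps the obstacle you name by choosing a source problem whose constraints are already additive. If you want to salvage your route you must actually exhibit the per-edge gadget; otherwise, switching to \textsc{Unary $\ell$-Bin Packing} yields an almost immediate reduction.
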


Furthemore, we obtain a hardness result for the Fair \FO Edge Deletion problem parameterized by the cluster vertex deletion number.
Observe that for any graph its cluster vertex deletion number is upper-bounded by the size of its twin cover.
\begin{theorem}\label{thm:cvdEdgeEval}
The \textsc{Fair \FO Edge Deletion} problem is \W{1}-hard when parameterized by the cluster vertex deletion number of the input graph.
\end{theorem}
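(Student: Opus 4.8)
Graphs of bounded cluster vertex deletion number have bounded clique-width, so plain \FO model checking is already \FPT for this parameter; the hardness must therefore be produced by the fair objective alone. The plan is to give a parameterized reduction from \textsc{Multicolored Clique} (equivalently, from the incidence formulation in which one additionally selects, for every pair of colour classes, a compatible edge). Given an instance $H$ with colour classes $V_1,\dots,V_k$ of size $n$, we shall build a graph $G$ with $\cvd(G)=\bigO{k^2}$, a budget $k'$, and an \FO sentence $\varphi$ of bounded quantifier rank so that $H$ has a multicoloured clique if and only if some edge set $F$ of fair cost at most $k'$ satisfies $G\setminus F\models\varphi$; since \FO model checking on bounded clique-width is \FPT, this is precisely the statement we want.

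The graph $G$ will consist of a \emph{core} $S$ of $\bigO{k^2}$ vertices together with a family of pairwise vertex-disjoint cliques, such that every edge not contained in one of these cliques has an endpoint in $S$; thus $G-S$ is a disjoint union of cliques and $\cvd(G)\le|S|$. The core contains a hub $s_i$ for each colour class, a hub $p_{ij}$ for each pair, and $\bigO{k^2}$ \emph{marker} vertices chosen so that every core vertex and every gadget clique is picked out by a fixed first-order formula referring to the markers. For each class $i$ we attach a clique $A_i$ that contains one ``genuine'' vertex per element of $V_i$ together with a large number of dummies, completely joined to $s_i$; for each pair $\{i,j\}$ we attach a clique $R_{ij}$ with one genuine vertex per edge of $H$ between $V_i$ and $V_j$ (plus dummies), completely joined to $p_{ij}$. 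We pick the budget $k'$ polynomial in $n$ --- large enough that $F$ \emph{may} delete all but one of the edges from a hub to its gadget clique, yet so small compared with the sizes of the padded cliques that no vertex can delete more than a negligible fraction of its clique-internal edges. Consequently $F$ can only meaningfully manipulate edges incident to the core, and a conjunct of $\varphi$ reading ``each hub has exactly one neighbour inside its gadget clique'' then \emph{forces} $F$ to realise exactly one selection per gadget: the surviving $s_i$-neighbour names a vertex $x_i\in V_i$ and the surviving $p_{ij}$-neighbour names an edge $e_{ij}$. To guarantee that these witness a clique of $H$, we add further auxiliary cliques, again attached to the rest of the graph only through $S$ (using additional markers), wired so that the remaining conjuncts of $\varphi$ --- first-order statements about which marked vertices stay adjacent in $G\setminus F$ --- are satisfiable within the budget exactly when, for all $i$ and $j$, the endpoint of $e_{ij}$ in $V_i$ equals $x_i$. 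Since $|S|=\bigO{k^2}$, $|\varphi|=\bigO{1}$, and $G,k'$ are computable in polynomial time, this is an \FPT reduction, establishing \W{1}-hardness for the parameter $\cvd$.

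The main obstacle is the design of the consistency gadgets. Because $\cvd(G)$ must stay polynomial in $k$, only $\bigO{k^2}$ vertices may lie outside the cliques, so all communication between the various ``menu'' cliques has to be funnelled through the bounded core; yet \FO can neither count unboundedly nor follow long paths, so the global question ``do the chosen vertices and edges of $H$ fit together?'' must be broken into a fixed number of bounded, local statements about marked vertices. Making these two requirements coexist --- and verifying that no mixture of hub, auxiliary, and clique-internal deletions lets the adversary satisfy $\varphi$ without an honest multicoloured clique --- is the technical heart of the argument.
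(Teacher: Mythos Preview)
Your plan has a genuine gap: you explicitly defer the ``technical heart'' --- the consistency gadgets enforcing that the selected edge $e_{ij}$ is incident to the selected vertex $x_i$ --- and offer no mechanism for how a bounded \FO sentence together with the fair budget at $\bigO{k^2}$ core vertices can express this numerical equality. With edge deletion and a core of size $\bigO{k^2}$, every check must be funnelled through a bounded set of vertices, yet the quantity to be compared (an index in $[n]$) is unbounded; you would need an additive trick of the ``lower part $+$ upper part $=n$'' flavour, and it is not clear how to implement that with edge deletions rather than vertex selections while keeping all auxiliary structure inside disjoint cliques. Until that gadget is actually exhibited and its soundness argued, the reduction is not complete.

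The paper avoids this difficulty by reducing from a different source problem, \textsc{Unary $\ell$-Bin Packing} parameterized by the number $\ell$ of bins, which is already \W{1}-hard. Each item $i$ becomes a clique $C_i$ on $3B$ vertices; there are $\ell$ bin vertices $v_1,\dots,v_\ell$, each $v_j$ joined to $s_i$ designated \emph{special} vertices inside every $C_i$, plus one pendant guard; removing the $2\ell$ bin and guard vertices leaves disjoint cliques, so $\cvd(G)=2\ell$. The fair budget is $B$. The \FO sentence uses predicates \textit{guard}, \textit{bin}, \textit{notable}, an \textit{item-edge} predicate (which, because cliques have size $3B$, forces $F$ to touch only bin--special edges), and predicates \textit{cover}/\textit{same} saying that each special vertex has lost some bin-edge and that special vertices in the same clique lose edges to the same bins. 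Then the fair cost at $v_j$ is exactly the total size of items assigned to bin $j$, so feasibility with budget $B$ is equivalent to a valid packing. No consistency gadget is needed: the single numerical constraint of bin packing is encoded directly by the fair objective at the bin vertices. If you want to salvage your \textsc{Multicolored Clique} route, you must supply the missing incidence gadget; otherwise the bin-packing reduction is both simpler and complete.
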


We refer to Figure~\ref{fig:classesVC} for an overview of results for the \textsc{Fair VC} problem.
\begin{figure}[bt]
  \begin{minipage}[c]{0.5\textwidth}
	\centering
	\includegraphics{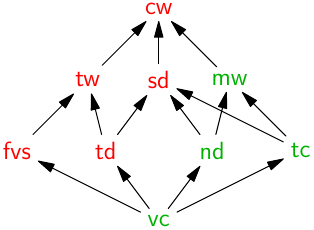}
  \end{minipage}\hfill
  \begin{minipage}[c]{0.5\textwidth}
\caption{Hierarchy of graph parameters for the \textsc{Fair VC} problem. An arrow indicates that a graph parameter upper-bounds the other. Thus, hardness results are implied in direction of arrows and \FPT algorithms are implied in the reverse direction.
Green colors indicate \FPT results and red are hardness results.
}
\label{fig:classesVC}
\end{minipage}
\end{figure}
For an overview of the results, please refer to Table~\ref{tab:results} and to Figure~\ref{fig:classes} for the hierarchy of classes.

\begin{table}[bt]
\begin{center}
  \begin{tabular}{ | l|| c |  c | c | c | c | c |  }
    \hline
	  ~ & $\mathsf{vc}$ & $\mathsf{fvs+td}$& $\mathsf{tc}$ & $\mathsf{nd}$ & $\mathsf{cvd}$ & $\mathsf{mw}$    \\
    \hline\hline
	  \textsc{Fair VC} & \eau & \ha T\ref{thm:fairVCisWHwrtTD+FVS}  & \easy & \eau & \unkn  &\ea T\ref{thm:fairVC}   \\
	  \textsc{FV $\mathsf{L}$ Del} & $\mathsf{MSO}_2$ \eau & \empt~\hard  &\MSOo \easy & \MSOo \eau  & \unkn  &\unkn   \\
	  \textsc{FV $\mathsf{L}$ Eval} & $\mathsf{MSO}_2$ \ea \cite{tamc} & \empt~\hard  &\ea \MSOo T\ref{thm:fairtc} & \MSOo \eau  & \unkn  &\unkn    \\
	  \textsc{$\ell$-FV $\mathsf{L}$ Eval} & $\mathsf{MSO}_1$ \eau  &\empt~\hard &\ha \MSOo T\ref{thm:lFairVEisWH} & \ea \MSOo \cite{KKMT} & \MSOo \hard & \MSOo \hard  \\\hline
    \textsc{FE $\mathsf{L}$ Del} &  $\mathsf{MSO}_2$ \ea \cite{tamc}   &\FO \cite{tamc} \ha  &\unkn & \unkn  & \FO \ha T\ref{thm:cvdEdgeEval} &\unkn   \\
   \hline
  \end{tabular}
\end{center}
\caption{\label{tab:results}
The table summarize some related (with a citation) and all the presented (with a reference) results on the studied parameters.
Green cells denote \FPT results, and red cells represent hardness results.
Logic $\mathsf{L}$ in metatheorems is specified by a logic used in the respective theorem.
By~\empt{} we denote any logic that can express an edgeless graph.
Symbol $\ast$ denotes implied results from previous research and symbol \checkmark denotes new implied results.
A question mark (?) indicates that the complexity is unknown.
The \textsc{Fair Edge $\mathsf{L}$ Deletion} problem is delimited from Vertex problems, since there are no apparent relations between them.
}
\end{table}

\begin{figure}[bt]
  \begin{minipage}[c]{0.37\textwidth}
	\centering
	\includegraphics{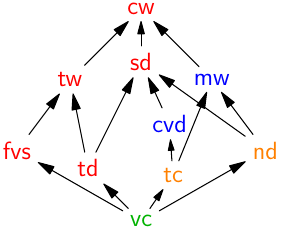}
  \end{minipage}\hfill
  \begin{minipage}[c]{0.63\textwidth}
\caption{Hierarchy of graph parameters with depicted complexity of the \textsc{Fair $\mathsf{L}$ Vertex Evaluation} problem.
An arrow indicates that a graph parameter upper-bounds the other.
Thus, hardness results are implied in the direction of arrows, and \FPT algorithms are implied in the reverse direction.
Green colors indicate \FPT results for \MSOt, orange are \FPT for \MSOo, blue are open, and red are hardness results.
We denote treewidth by \tw, shrubdepth by \sd and clique width by \cw.
We refer to book~\cite{CyganFKLMPPS15} for definitions.
Other parameters and their respective abbreviations are defined in Subsection~\ref{sub:prelim}.
}
\label{fig:classes}
\end{minipage}
\end{figure}

\subsection{Preliminaries}\label{sub:prelim}
We denote the set $\{1,\ldots,n\}$ by $[n]$.
We deal with simple undirected graphs, for further standard notation we refer to monographs: graph theory~\cite{kapitoly} and parameterized complexity~\cite{CyganFKLMPPS15}.
For a vertex $v$ by $N(v)$ we denote the neighborhood of $v$ and by $N[v]$ we denote the closed neighborhood of vertex $v$, i.e., $N(v)\cup \{v\}$.

A parameter closely related to twin cover is \emph{cluster vertex deletion} ($\cvd(G)$), that is, the smallest number of vertices one has to delete from a graph in order to get a collection of (disjoint) cliques.
\emph{Treedepth} of a graph $G$ ($\td(G)$) is the minimum height of a rooted forest whose transitive closure contains the graph $G$~\cite{sparsity}.
\emph{Feedback vertex set} ($\fvs(G)$) is the minimum number of vertices of a graph $G$ whose removal leaves a graph without cycles.
\emph{Neighborhood diversity} ($\nd(G)$) is the smallest integer $r$ such that the graph can be partitioned into $r$ sets where each set is either complete graph or independent set and each pair of sets forms either a complete bipartite graph or there is no edge between them.
\emph{Modular width} of a graph $G$ ($\mw(G)$), is the smallest positive integer $r$ such that $G$ can be obtained from an algebraic expression of width at most $r$, defined as follows.
The \emph{width of an expression} $A$ is the maximum number of operands used by any occurrence of the substitution operation in $A$,
where $A$ is an algebraic expression that uses the following operations:
\begin{enumerate}
  \item Create an isolated vertex.
  \item The {\em substitution operation} with respect to a template graph $T$ with vertex set $[r]$ and graphs ${G_1,\dots,G_r}$ created by algebraic expression. The substitution operation, denoted by ${T(G_1, \dots, G_r)},$ results in the graph on vertex set ${V = V_1\cup\dots\cup V_r}$ and edge set
\(
E = E_1\cup\dots\cup E_r\cup \bigcup_{\{i,j\}\in E(T)} \Setof{\{u,v\}}{ u\in V_i, v\in V_j}
,\)
where ${G_i = (V_i,E_i)}$ for all ${i \in [r].}$
\end{enumerate}
An algebraic expression of width $\mw(G)$ can be computed in linear time~\cite{TCHP08}.

We stick with standard definitions and notation in logic.
For a comprehensive summary, please consult a book by Libkin~\cite{LibkinFMT}.

\section{Metatheorems for Fair Evaluation}\label{sec:fairMetatheorem}
First, we show an \FPT algorithm for the \textsc{fair \MSOo vertex evaluation} problem parameterized by the twin cover number as it is stated in Theorem~\ref{thm:fairtc}.
We give a more detailed statement that implies the promised result straigthforwardly.

We split the proof into two parts.
First, we show an algorithm for \MSOo model checking parameterized by twin cover of the graph (Proposition~\ref{prop:mctc}).
It is an alternative proof of the result by Ganian~\cite{ganian11} but this version is useful to be extended for the fair cost.
In the second part, we prove that we can even compute the optimal fair cost (Proposition~\ref{prop:tcfair}) and so derive the promised result.

\paragraph*{Overview of the Algorithm.}
For the model checking algorithm, we closely follow the approach of Lampis~\cite{Lampis:13}.
The idea is that if there is a large set of vertices with the same closed neighborhood, then some of them are irrelevant, i.e., we can delete them without affecting the truthfulness of the given formula $\varphi$.
For graphs of bounded neighborhood diversity using this rule alone can reduce the number of vertices below a bound that depends on $\nd(G)$ and $|\varphi|$ only, thus providing an FPT model checking algorithm.
For the graphs of bounded twin cover, this approach can be used to reduce the size of all (twin) cliques, yet their number can still be large.
We take the approach one step further and describe the deletion of irrelevant cliques in a similar manner; these rules together yield a model checking algorithm for graphs of bounded twin cover.

The reduction rules also lead to a notion of \emph{shape} of a set $W \subseteq V$.
The motivation behind shapes is to partition all subsets of $V$ such that if two sets $W,W'$ have the same shape, then $G \models \varphi(W)$ if and only if $G \models \varphi(W')$.
This allows us to consider only one set of each shape for the purposes of model checking.
Since the number of all distinct shapes is bounded by some function of parameters, we can essentially brute force through all possible shapes.

A final ingredient is an algorithm that for a given shape outputs a subset of vertices with this shape that minimizes the fair cost.
This algorithm uses ILP techniques, in particular minimizing quasiconvex function subject to linear constraints.

\paragraph{Notation.}
In what follows $G = (V,E)$ is a graph and $K$ is its twin cover of size $k$.
An \MSOo formula $\varphi$ contains $q_S$ set quantifiers and $q_v$ element (vertex) quantifiers.
Given a twin cover $K$ and $A \subseteq K$, we say that $A$ is the \emph{cover set} of a set $S \subseteq V\setminus K$ if every $v \in S$ has $N(v) \cap K = A$.
Note that, by the definition of twin cover, for all $u,v \in V \setminus K$ with $\{u,v\} \in E$ we have that $A$ is a cover set for $u$ if and only if $A$ is a cover set for $v$.
We say that two cliques have the same \emph{type} if they have the same size and the same cover set.
Clearly, if the cover set is fixed, two cliques agree on type if and only if their sizes are the same.
We define a \emph{labeled graph}, that is, a graph and a collection of labels on the vertices.
We say that two cliques have the \emph{same labeled type} if all of them have the same size, the same cover set and the same labels on vertices.

\subsection{Model checking}\label{sec:tcModelChecking}
We give a reformulated combination of Lemma~5 and Theorem~4 by Lampis~\cite{lam}.

\begin{proposition}[{\cite[Lemma 5 and Theorem~4]{lam}}]\label{prop:lam}
Let $\phi$ be an \MSOo formula and let $G$ be a labeled graph.
If there is a set $S$ of more than $2^{q_S}q_v$ vertices having the same closed neighborhood and the same labels,
then for any $v \in S$ we have $G \models \varphi$ if and only if $G \setminus v \models \varphi$.

In particular, if $G$ is a graph with just one label, then for any clique $C$ where each vertex has exactly the same closed neighborhood in $G$ the following holds.
Either there is a vertex $v\in C$ such that $G\models \phi$ if and only if $G\setminus v \models \phi$ or the size of $C$ is bounded by
\[2^{q_S + 1}q_v.\]
\end{proposition}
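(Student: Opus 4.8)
The plan is to derive both statements from a common Ehrenfeucht–Fra\"iss\'e style argument for \MSOo. First I would recall the standard fact that \MSOo formulas with $q_S$ set quantifiers and $q_v$ element quantifiers cannot distinguish two labeled graphs that are equivalent in the $(q_S, q_v)$-round \MSOo game. The key combinatorial observation is that if a set $S$ of vertices all share the same closed neighborhood and the same labels, then $S$ is a set of \emph{mutual twins}: any automorphism-like swap of two such vertices, or the deletion of one of them, changes the graph only in ways the Duplicator can mirror. Concretely, I would set up the game between $G$ and $G \setminus v$ for a fixed $v \in S$, maintaining the invariant that whenever Spoiler picks a vertex or a set in one structure, Duplicator answers by essentially copying it, rerouting any reference to $v$ to some other not-yet-pebbled vertex of $S$. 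This rerouting is possible precisely because $|S| > 2^{q_S} q_v$: across the whole play there are at most $q_v$ element moves, and each of the $\le q_S$ set moves partitions $S$ into at most $2^{q_S}$ classes according to membership in previously chosen sets; so in the relevant class there is always an unpebbled vertex of $S$ to serve as a stand-in for $v$.

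The heart of the argument is checking that this Duplicator strategy preserves the winning condition, i.e. that the induced partial map is a partial isomorphism respecting labels and the (constantly many) chosen monadic predicates. Since all vertices of $S$ have identical closed neighborhoods and identical labels, swapping $v$ for another vertex $w \in S$ preserves adjacency to every vertex outside $S$, preserves non-adjacency/adjacency within $S$ (they all agree), preserves the "equal label" relation, and — by the class-counting above — preserves membership in each Spoiler-chosen set. This is the step I expect to be the main obstacle: one must be careful that the bookkeeping handles the case where Spoiler chooses a set move \emph{after} $v$ (or its stand-in) has already been pebbled, and the case where several vertices of $S$ get pebbled over the course of the game. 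Both are handled by choosing the stand-in lazily and by the bound $2^{q_S} q_v$, which is exactly large enough to guarantee a fresh witness in the right class at every point. Formally this is Lemma~5 and Theorem~4 of Lampis~\cite{lam}; I would simply reformulate their statement to the labeled setting, which their proof already covers verbatim.

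For the "in particular" clause, I would specialize to a graph $G$ with a single label and a clique $C$ in which every vertex has exactly the same closed neighborhood in $G$ — so that $C$ itself is the witnessing set of mutual twins with a common closed neighborhood and a common (trivial) label. Applying the first part with $S = C$: if $|C| > 2^{q_S} q_v$ then there exists $v \in C$ with $G \models \varphi \iff G \setminus v \models \varphi$; contrapositively, if no such $v$ exists then $|C| \le 2^{q_S} q_v$, and I would state the slightly looser bound $2^{q_S+1} q_v$ for uniformity with the later use. No further calculation is needed; the two displayed bounds differ only by the harmless factor of two absorbed for convenience.

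Overall, the proof is essentially a citation-and-reformulation: the technical content lives entirely in the \MSOo Ehrenfeucht–Fra\"iss\'e argument of Lampis, and the only thing to verify is that adding vertex labels does not affect the counting (it does not, since labels are just extra unary predicates that the stand-in vertex shares by hypothesis). Thus I would present the statement as a direct consequence of \cite[Lemma~5 and Theorem~4]{lam}, with the "in particular" part obtained by plugging in $S = C$.
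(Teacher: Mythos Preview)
The paper does not prove this proposition at all; it is stated purely as a citation of Lampis~\cite{lam}, and your proposal correctly recognizes this and treats it as a citation-and-reformulation. Your Ehrenfeucht--Fra\"iss\'e sketch of the underlying argument is sound and matches the standard proof.

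There is, however, a small but genuine gap in your derivation of the ``in particular'' clause. You read ``a graph with just one label'' as meaning that all vertices carry the same (trivial) label, so that $C$ is automatically a set of identically-labeled twins and you can plug $S=C$ directly into the first part, dismissing the extra factor of $2$ as ``harmless'' and ``for convenience.'' That is not the intended reading: in the paper (see the proof of Lemma~\ref{lem:shape}, where ``one label corresponding to set $W$'' is introduced), ``one label'' means one unary label \emph{predicate}, so vertices of $C$ fall into two label-classes (in $W$ / not in $W$). The factor of $2$ is therefore not cosmetic: if $|C| > 2^{q_S+1}q_v$, pigeonhole gives a label-class of size $> 2^{q_S}q_v$ inside $C$, and only then does the first part apply to that subclass. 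Without this step your argument does not actually yield the stated bound $2^{q_S+1}q_v$; you would only get $2^{q_S}q_v$ under an assumption (uniform labels on $C$) that the statement does not make.
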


Proposition~\ref{prop:lam} bounds the size of a maximum clique in $G\setminus K$ because we can apply it repeatedly for each clique that is bigger than the threshold $2^{q_S + 1}q_v$.
Now, we need to bound the number of cliques of each type.
For this, we establish the following technical lemma.

\begin{lemma}\label{lem:induction}
Let $G$ be a labeled graph with twin cover $K$.
Let $\varphi$ be an \MSOo formula with $q_v$ element quantifiers and $q_S$ set quantifiers.
Suppose the size of a maximum clique in $G \setminus K$ is bounded by $r$.
If there are strictly more than
\[\alpha(q_S,q_v)=2^{r{q_S}}(q_v+1)\]
cliques of the same labeled type $\mathcal{T}$, then there exists a clique $C$ of the labeled type $\mathcal{T}$ such that $G\models \varphi$ if and only if $G\setminus C \models \varphi$.
\end{lemma}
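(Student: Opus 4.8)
The plan is to mimic the structure of Proposition~\ref{prop:lam}, but to play the Ehrenfeucht--Fra\"iss\'e (equivalently, the quantifier-elimination/game) argument on the level of \emph{cliques} rather than single vertices. Fix a labeled type $\mathcal{T}$ with cover set $A\subseteq K$ and size $s\le r$, and suppose there are more than $\alpha(q_S,q_v)=2^{rq_S}(q_v+1)$ cliques of type $\mathcal{T}$. Let $\mathcal{C}=\{C_1,\dots,C_m\}$ be this collection of cliques, $m>\alpha(q_S,q_v)$. The goal is to find one $C\in\mathcal{C}$ that is ``redundant'' for $\varphi$, i.e.\ $G\models\varphi \iff G\setminus C\models\varphi$. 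I would prove this via an $r$-round, $q_S$-set / $q_v$-element Ehrenfeucht--Fra\"iss\'e game between $G$ and $G'=G\setminus C$, showing Duplicator wins and hence the two structures agree on all \MSOo formulas of the given quantifier counts.

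First I would classify the cliques by how the set quantifiers can ``slice'' them. A set quantifier, when interpreted in $G$, intersects each clique $C_i$ in some subset; since $|C_i|=s\le r$ and all vertices of $C_i$ have the same closed neighborhood and the same label, the only thing that matters about that intersection, for the purpose of continuing the game, is its size (a number in $\{0,1,\dots,s\}$), and after $q_S$ set quantifiers each clique carries a vector in $\{0,\dots,s\}^{q_S}$, of which there are at most $(s+1)^{q_S}\le 2^{rq_S}$ possibilities (one can be slightly more careful and use $(r+1)^{q_S}$, but $2^{rq_S}$ suffices). By pigeonhole, among the $m>2^{rq_S}(q_v+1)$ cliques, after Spoiler fixes all set moves there is a class of more than $q_v+1$ cliques that are ``set-equivalent'' — Duplicator can arrange the set moves on $G'$ so that the corresponding (surviving) cliques receive exactly the same size-vectors. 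Within such a large set-equivalent class, at least one clique, say $C=C_j$, will remain completely untouched by the $q_v$ element moves of the game, since the element moves can hit at most $q_v$ of them, leaving at least one clique of the class with no pebble on it and with a ``partner'' clique available on the other side.

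The heart of the argument — and the step I expect to be the main obstacle — is the Duplicator strategy that makes this work simultaneously for \emph{both} the set moves and the element moves, i.e.\ proving that deleting one entire clique $C$ from a sufficiently large set-equivalent class preserves the winning condition. The key structural fact to exploit is that $C$ is a \emph{module}: every vertex of $C$ has the same neighborhood $A\subseteq K$ outside $C$ and the same label, $C$ is a clique internally, and crucially $C$ has no neighbors in $V\setminus K$ outside itself (this is where the twin-cover definition and the fact that cliques in $G\setminus K$ with the same cover set are the connected components enter). Thus from the point of view of the rest of the graph, the cliques of type $\mathcal{T}$ are interchangeable ``pendant gadgets'', and having $\alpha(q_S,q_v)+1$ of them is as good as having $\alpha(q_S,q_v)$ of them, exactly in the spirit of how $2^{q_S}q_v+1$ twins are as good as $2^{q_S}q_v$ in Lampis's lemma. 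Concretely, when Spoiler plays a set $U_i$ on one side, Duplicator copies $U_i$ verbatim on the common part $V\setminus(C_1\cup\dots\cup C_m)$, and on the cliques distributes the intersection sizes so that in the deleted copy the class of cliques with identical size-vectors is shrunk by exactly one (which the pigeonhole bound guarantees is possible while keeping at least $q_v+1$ of them). For element moves, if Spoiler plays inside some clique, Duplicator answers inside the matching clique (matching by type and by the size-vector it was assigned); if the clique played in is the deleted $C$ on the $G$-side — which can only happen on the $G$-side — Duplicator answers inside one of the at least $q_v+1$ spare cliques with the same assigned size-vector that has not yet been used, and maintains a partial bijection on the used cliques. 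One checks by induction on the number of rounds that this strategy maintains a partial isomorphism of the induced labeled+pebbled structures, so Duplicator wins and $G\equiv_{q_S,q_v}G\setminus C$, which is exactly the claim. I would then remark that iterating the lemma bounds the number of cliques of every labeled type by $\alpha(q_S,q_v)$, which, combined with Proposition~\ref{prop:lam} bounding their sizes, yields a kernel of size bounded purely in terms of $k$ and $|\varphi|$ — the input to the model-checking step.
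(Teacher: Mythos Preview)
Your approach via an Ehrenfeucht--Fra\"iss\'e game is sound and is the semantic dual of what the paper does: the paper proceeds by straight structural induction on $q_S+q_v$, peeling off the outermost quantifier and using the pigeonhole bound at each step, rather than describing a Duplicator strategy. Both arguments rest on exactly the same arithmetic (a set quantifier multiplies the number of labeled sub\hy types of a clique by at most $2^{r}$, an element quantifier touches at most one clique), so neither buys more generality; the inductive version is just shorter because it avoids having to spell out how Duplicator maintains the matching.

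One point in your write\hy up is not quite right and you should fix it before carrying the argument through. You claim that after the set moves the only relevant information about a clique $C_i$ is the \emph{size vector} $(|U_1\cap C_i|,\dots,|U_{q_S}\cap C_i|)\in\{0,\dots,s\}^{q_S}$. That is true only for the first set move; once $U_1$ has split $C_i$ into ``in'' and ``out'' parts, the vertices are no longer interchangeable, and for $U_2$ one must record how it intersects each part, not just $|U_2\cap C_i|$. Two cliques of size $2$ with size vector $(1,1)$ can be non\hy isomorphic as labeled structures (one has a vertex in $U_1\cap U_2$, the other does not). What you actually need to track is the full labeled type, i.e.\ the multiset of membership patterns in $\{0,1\}^{q_S}$. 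Fortunately your final bound $2^{rq_S}$ is still correct for this finer invariant (there are at most $(2^{q_S})^{r}=2^{rq_S}$ functions from $\le r$ vertices to $\{0,1\}^{q_S}$), so the numerics survive; only the intermediate justification needs adjusting. Relatedly, be careful about the order in which $C$ is chosen: in your first paragraph you seem to pick $C$ only after seeing Spoiler's moves, whereas the game is played on $G$ versus $G\setminus C$ with $C$ fixed in advance. Since any two cliques of the same labeled type are swapped by an automorphism of $G$, this is harmless (``some $C$'' and ``any $C$'' coincide), but you should say so explicitly. The paper's inductive argument sidesteps both issues by never committing to a game formulation.
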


\begin{proof}
We prove the lemma by induction on $q_S+q_v$.
Without loss of generality, all the quantifiers are assumed to be existential.

The base case of the induction is a quantifier-free formula. If there is at least one labeled type with at least two cliques $C_1, C_2$ then the following holds.
If $G\models\varphi$ then $G\setminus C_1\models\varphi$ clearly holds as well since clique $C_2$ has the same cover set and the same labels and a quantifier-free formula can only examine the labeled vertices. If $(G\setminus C_1)\models\varphi$ and since $C_1, C_2$ have the same labels the same size and the same cover set so $G\models\varphi$.

For the induction case, we consider the first quantified variable in $\varphi$ %
and we split the proof whether it is set or vertex variable.
Suppose there is at least one labeled type which contains strictly more than $\alpha(q_S,q_v)$ cliques.

If it is a set variable then we try all possible assignments of the variable to cliques of the chosen labeled type.
There are at most $2^r$ of possible assignments to single clique and so from cliques of one labeled type emerge at most $2^r$ different labeled (sub)types of cliques. We can compute that at least one of them has strictly more than $\alpha(q_{S}-1,q_v)$:
\[
\Bigg\lceil{ \alpha(q_S,q_v)+1\over 2^r} \Bigg\rceil
\geq
\Bigg\lceil {2^{r(q_S-1)}(q_v+1)+{1\over {2^r}}} \Bigg\rceil
\geq
 \alpha(q_{S}-1,q_v)+1.
\]
So, by the induction assumption we know that there is a clique $C$ in the newly created labeled (sub)type of the promised properties and so of the larger labeled type.

If it is a vertex variable then only one more different labeled type can be created and importantly at most one single clique may contain the new label. We can compute:
\[
 \alpha(q_s,q_v)+1-1
\geq
 2^{r{q_S}}q_v+2^{r{q_S}}
\geq
\alpha(q_s,q_v-1)+1.
\]
The argument follows from the induction assumption by the same reasoning as in the previous case.
\end{proof}

From this, we can derive a model checking algorithm.

\begin{proposition}[Model checking on graphs of bounded twin cover]\label{prop:mctc}
Let $G$ be a graph with twin cover $K$ of size $k$ and the size of the maximum clique in $G\setminus K$ bounded by $2^{q_S}q_v$ and $\varphi$ is an \MSOo sentence then either
\begin{itemize}
  \item there exists a clique $C\in G\setminus K$ such that $G\models \phi$ if and only if $G\setminus C \models \varphi$ or
  \item the size of $G$ is bounded by
$
k+(q_v+1)q_v^2 2^{k+2q_S+2^{q_S}q_S q_v}=2^{\bigO{k+2^{q_S}q_S q_v}}.
$
\end{itemize}
\end{proposition}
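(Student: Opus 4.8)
The plan is to combine the two reduction mechanisms already developed---Proposition~\ref{prop:lam} (bounding clique sizes) and Lemma~\ref{lem:induction} (bounding the number of cliques per labeled type)---with a counting of the number of distinct labeled types that can arise, and then use the fact that the input is an \MSOo \emph{sentence} so that the single available label set is trivial (just one label on the whole graph), which makes the ``labeled type'' of a clique determined by its size and its cover set only. First I would set up the bound on clique sizes: the hypothesis already gives that every clique in $G\setminus K$ has size at most $r\df 2^{q_S}q_v$, so we take this $r$ as the parameter feeding into Lemma~\ref{lem:induction}. (If one does not assume this hypothesis, one first applies Proposition~\ref{prop:lam} repeatedly to each oversized clique; since the proposition is stated for graphs with one label, this is exactly the setting of a sentence, and the ``either $\ldots$ or $\ldots$'' dichotomy in the statement of the present proposition is what records the case where such a deletion is available.)

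Next I would count labeled types. A clique in $G\setminus K$ is described by (i) its size, which lies in $\{1,\dots,r\}$ with $r = 2^{q_S}q_v$, hence at most $r = 2^{q_S}q_v$ choices; (ii) its cover set $A\subseteq K$, of which there are at most $2^{k}$; and (iii) the labels on its vertices---but with a single global label this contributes only a multiplicative constant (or at most $2^{r}$ if one is being conservative about which vertices carry the label, which is already absorbed in the exponent). So the number of distinct labeled types is at most $2^{q_S}q_v\cdot 2^{k}\cdot(\text{small factor}) = 2^{\bigO{k+q_S}}q_v$. Now apply Lemma~\ref{lem:induction} with this value of $r$: if any labeled type has more than $\alpha(q_S,q_v)=2^{rq_S}(q_v+1)=2^{2^{q_S}q_S q_v}(q_v+1)$ cliques, then there is a clique $C$ of that type with $G\models\varphi \iff G\setminus C\models\varphi$, which is the first alternative of the proposition. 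Otherwise every labeled type has at most $\alpha(q_S,q_v)$ cliques.

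In the remaining case I would simply total everything up. The $k$ cover vertices contribute $k$. Each clique has at most $r=2^{q_S}q_v$ vertices, there are at most $\alpha(q_S,q_v)=2^{2^{q_S}q_S q_v}(q_v+1)$ cliques of each labeled type, and at most $2^{q_S}q_v\cdot 2^k$ (up to constants) labeled types, giving at most
\[
k+\; 2^{q_S}q_v\;\cdot\;\bigl(2^{q_S}q_v\cdot 2^{k}\bigr)\;\cdot\;2^{2^{q_S}q_S q_v}(q_v+1)
\;=\;
k+(q_v+1)q_v^2\,2^{k+2q_S+2^{q_S}q_S q_v}
\]
vertices, matching the stated bound; rewriting the dominant exponent absorbs the polynomial factors into $2^{\bigO{k+2^{q_S}q_S q_v}}$. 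The main obstacle I anticipate is bookkeeping rather than conceptual: one must be careful that ``labeled type'' for a \emph{sentence} really does collapse to (size, cover set) so that the type count is genuinely $2^{\bigO{k+q_S}}q_v$ and not something with a $2^r$ blow-up that would spoil the clean exponent; and one must check that the exhaustive-deletion preprocessing (apply Proposition~\ref{prop:lam} to shrink cliques, then Lemma~\ref{lem:induction} to remove surplus cliques) can be run in a fixed order without the two rules interfering---shrinking a clique never creates new oversized types, and deleting a whole clique never enlarges another clique---so the process terminates with either a witnessing deletion or a graph of the bounded size claimed.
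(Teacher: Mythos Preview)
Your proposal is correct and essentially identical to the paper's own proof: count $2^{k}$ cover sets times $r=2^{q_S}q_v$ possible clique sizes to get at most $2^{k}r$ types, apply Lemma~\ref{lem:induction} to bound each type by $\alpha(q_S,q_v)=2^{rq_S}(q_v+1)$ cliques, multiply by the maximum clique size $r$, and add the $k$ cover vertices to obtain $k+2^{k}r^{2}\alpha(q_S,q_v)=k+(q_v+1)q_v^{2}\,2^{k+2q_S+2^{q_S}q_Sq_v}$. Your extra remarks about the labeled-type count collapsing for a sentence and about the two reduction rules not interfering are sound but go slightly beyond what the paper writes out.
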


\begin{proof}
There are $k$ vertices in the cover and $2^kr$ types of cliques and each of them (by Lemma~\ref{lem:induction}) is repeated at most $\alpha(q_S,q_v)=2^{r{q_S}}(q_v+1)$ otherwise one clique of that type cannot be distinguished by formula $\phi$.
The maximal size of the clique is $r=2^{q_S}q_v$ from Proposition~\ref{prop:lam} and this gives us the desired bound.
\[\quad\ \ \!k+2^kr^2 \alpha(q_S,q_v)=
k+2^kr^2 2^{r{q_S}}(q_v+1)=
k+2^k (2^{q_S}q_v)^2 2^{2^{q_S}q_v q_S}(q_v+1)=
\]
$
\quad =\quad
k+(q_v+1) q_v^2 2^{k+2q_S+2^{q_S}q_v q_S}.
$%
\end{proof}

\subsection{Finding a Fair Solution}\label{sec:tcFindingFairSolution}
In the upcoming proof we follow the ideas of Masařík and Toufar~\cite{tamc}.
They define, for a given formula $\varphi(X)$, a so-called shape of a set $W \subseteq V$ in $G$.
The idea behind a shape is that in order to do the model checking we have deleted some vertices from $G$ that cannot change the outcome of $\varphi(X)$, however, we have to derive a solution of minimal cost in the whole graph $G$.
Thus the shape characterizes a set under which $\varphi(X)$ holds and we have to be able to find a set $W \subseteq V(G)$ for which $\varphi(W)$ holds and $W$ minimizes the fair cost among sets having this shape.

\paragraph{Shape.}
Let $G = (V,E)$ be a graph, $\varphi(X)$ an \MSO formula, $K\subseteq V$ a twin cover of $G$, $A\subseteq K$, and let $r = 2^{q_S+2}q_v$ and $\alpha = 2^{r (q_S+1)}(q_v+1)$.
Let $\mathcal{C}_A$ be the collection of all cliques in $G$ such that $A$ is their cover set.
We define an \emph{$A$-shape}.
An $A$-shape of size $r$ is a two dimensional table $S_A$ of dimension $(r+2)\times(r+2)$ indexed by $\left\{0,1,\ldots,r + 1 \right\}\times \left\{ 0,1,\ldots,r + 1 \right\}$.
Each entry $S_A(i,j) \in \left\{ 0,\ldots,\alpha + 1 \right\}$.
The interpretation of $S_A(i,j)$ is the minimum of $\alpha + 1$ and of the number of cliques $C\in\mathcal{C}_A$ such that
\[
\min(\alpha+1, |C \cap W|)=i \textrm{ and } \min(\alpha+1, |C\setminus W|)=j.
\]
Finally, the shape of $X$ in $G$ is a collection of $A$-shapes for all $A\subseteq K$.

A solution for $\mathcal{C}_A$ can be formally described by a function \mbox{$\mathrm{sol}\colon\mathcal{C}_A\to\N\times\N$}.
The solution $\mathrm{sol}$ is \emph{compatible} if for every $C\in\mathcal{C}_A$ with $\mathrm{sol}(C) = (i,j)$ either $i+j = |C|$ or $|C|\ge r$, $i = r + 1$ (or $j = r + 1$) and $i+j < |C|$.
For an illustration of a compatible assignment please refer to Figure~\ref{fig:6x6example}.
We say that a compatible solution $\mathrm{sol}$ \emph{agrees} with shape $S_A$ if $S(i,j) = \left| \mathrm{sol}^{-1}(i,j) \right|$, whenever $S(i,j) \leq \alpha$ and $\left| \mathrm{sol}^{-1}(i,j) \right| \geq \alpha + 1 $ if $S(i,j) = \alpha + 1$.
The $A$-shape $S_A$ is said to be \emph{valid} if there exists a solution that agrees with $S_A$.
Note that such a solution does not exist if the shape specifies too many (or too few) cliques of certain sizes.
The shape $S$ is \emph{valid} if all its $A$-shapes are valid.

The following lemma is a key observation about shapes.
\begin{lemma}\label{lem:shape}
Let $\varphi$ be an \MSOo formula with one free variable, $G$ a graph and $W, W'$ two subsets of vertices having the same shape.
Then $G \models \varphi(W)$ if and only if $G \models \varphi(W')$.
\end{lemma}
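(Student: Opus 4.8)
The plan is to reduce the statement to Proposition~\ref{prop:lam} and Lemma~\ref{lem:induction} by showing that from $(G,W)$ one can reach a canonical small labeled graph that depends only on the shape of $W$, and that every reduction step preserves the truth value of $\varphi(W)$. First I would encode the pair $(G,W)$ as a labeled graph $G_W$: put a single label (say, \enquote{in $W$}) on exactly the vertices of $W$. Then $G\models\varphi(W)$ is equivalent to $G_W\models\varphi'$, where $\varphi'$ is the \MSOo sentence obtained from $\varphi(X)$ by replacing every atom $x\in X$ with the predicate \enquote{$x$ has the label}. Note $\varphi'$ has the same number $q_S$ of set quantifiers and the same number $q_v$ of vertex quantifiers as $\varphi$; hence $r=2^{q_S+2}q_v$ and $\alpha=2^{r(q_S+1)}(q_v+1)$ are exactly the thresholds appearing in Proposition~\ref{prop:lam} and Lemma~\ref{lem:induction} (the $q_S+1$ there absorbs the monadic predicate as one extra set quantifier, and correspondingly the clique-size bound is $2^{q_S+1}q_v$ which is at most $r$).

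The core of the argument is an exhaustive reduction. Fix $A\subseteq K$ and a clique $C$ with cover set $A$. Within $C$, the $W$-vertices all have the same closed neighborhood (namely $C\cup A$) and the same label, and likewise the non-$W$-vertices share closed neighborhood and label; so by Proposition~\ref{prop:lam}, as long as either group exceeds $2^{q_S+1}q_v$ vertices we may delete a vertex from the larger group without changing whether $G_W\models\varphi'$. Iterating this reduces every clique so that $|C\cap W|$ and $|C\setminus W|$ are each at most $r$, \emph{except} that if originally $|C|\ge r$ we keep exactly one side saturated at $r{+}1$ and shrink only the other side; this is precisely the definition of a \emph{valid solution} $\mathrm{sol}$, and the resulting clique sizes/intersection pattern are exactly what the entry $S_A(i,j)$ records. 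Next, having bounded all clique sizes by $r$, for each labeled type $\mathcal T$ (a cover set, a size, and a labeling pattern) Lemma~\ref{lem:induction} lets us delete whole cliques of type $\mathcal T$ while preserving truth of $\varphi'$, as long as there are more than $\alpha$ of them, until each labeled type occurs at most $\alpha$ times — exactly what is encoded by capping each $S_A(i,j)$ at $\alpha{+}1$.

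Carrying this out for every $C$ and every labeled type produces, from $(G,W)$, a labeled graph $H$ with the key property that $G_W\models\varphi'$ iff $H\models\varphi'$ and, crucially, $H$ is determined up to isomorphism by the shape of $W$ alone: the twin cover $K$ with its induced labeling is untouched, and outside $K$ the graph is fully specified by listing, for each $A\subseteq K$ and each pair $(i,j)$, the number $\min(\alpha{+}1,\cdot)$ of surviving cliques with $i$ labeled and $j$ unlabeled vertices, which is $S_A(i,j)$. Now if $W'$ has the same shape as $W$, running the same reductions on $(G,W')$ yields an isomorphic labeled graph $H'\cong H$ (one matches up $K$ by the identity and matches surviving cliques type-by-type using equality of the shape tables). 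Hence $G\models\varphi(W)\iff H\models\varphi'\iff H'\models\varphi'\iff G\models\varphi(W')$, which is the claim. The main obstacle I expect is the bookkeeping at the clique level: one must verify that the two reduction phases (shrinking individual cliques via Proposition~\ref{prop:lam}, then deleting surplus cliques via Lemma~\ref{lem:induction}) can be interleaved/ordered so that the first phase's output still satisfies the maximum-clique-size hypothesis of Lemma~\ref{lem:induction}, and that the \enquote{saturated side at $r{+}1$} convention is consistent with both lemmas — i.e., that capping at $r$ or at $\alpha{+}1$ never loses information the formula could detect, which is exactly where the quantifier-depth thresholds $r$ and $\alpha$ were chosen to be what they are.
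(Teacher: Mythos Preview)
Your proposal is correct and follows essentially the same route as the paper: label $G$ by $W$, apply the reductions of Proposition~\ref{prop:lam} and Lemma~\ref{lem:induction} to obtain a small labeled graph determined up to isomorphism by the shape, and conclude that the reduced graphs for $W$ and $W'$ are isomorphic. The paper states this in three sentences and leaves all the bookkeeping you spell out (the translation to $\varphi'$, the role of the thresholds $r$ and $\alpha$, the two reduction phases) implicit, so your version is simply a fleshed-out form of the same argument.
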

\begin{proof}
The proof follows using Proposition~\ref{prop:lam} and Lemma~\ref{lem:induction}.
Indeed, if we take the graph $G$ with one label corresponding to set $W$ and apply the reduction rules given by Proposition~\ref{prop:lam} and Lemma~\ref{lem:induction}
and repeat the same process with $W'$, we obtain two isomorphic labeled graphs.
\end{proof}

Lemma~\ref{lem:shape} allows us to say that a formula with one free variable holds under a shape since it is irrelevant which subset of vertices of this particular shape is assigned to the free variable.
Also note that deciding whether the formula holds under the shape can be done in FPT time by simply picking arbitrary assignment of the given shape and running a model checking algorithm.
\begin{figure}[bt]
 \begin{minipage}[t]{0.49\textwidth}
  \usetikzlibrary{calc,positioning}

\newcommand{\NDist}{.6cm}

\begin{tikzpicture}[node distance=\NDist, font=\scriptsize]
\tikzstyle{ctverec}=[draw, minimum width=\NDist, minimum height=\NDist]
\tikzstyle{ctverecPlny1}=[ctverec, fill=yellow!30]
\tikzstyle{ctverecPlny2}=[ctverec, fill=orange!80]

\node[ctverec, label={0}, label={180:0}] (1:0) {};
\node[ctverec, label={1}, right of=1:0] (1:1) {};
\node[ctverec, label={2}, right of=1:1] (1:2) {};
\node[ctverecPlny1, label={3}, right of=1:2] (1:3) {};
\node[ctverec, label={4}, right of=1:3] (1:4) {};
\node[ctverec, label={5}, right of=1:4] (1:5) {};
\node[ctverecPlny2, label={$\ge6$}, right of=1:5] (1:6) {};

\node[ctverec, below of=1:0, label={180:1}] (2:0) {};
\node[ctverec, right of=2:0] (2:1) {};
\node[ctverecPlny1, right of=2:1] (2:2) {};
\node[ctverec, right of=2:2] (2:3) {};
\node[ctverec, right of=2:3] (2:4) {};
\node[ctverec, right of=2:4] (2:5) {};
\node[ctverecPlny2, right of=2:5] (2:6) {};

\node[ctverec, below of=2:0, label={180:2}] (3:0) {};
\node[ctverecPlny1, right of=3:0] (3:1) {};
\node[ctverec, right of=3:1] (3:2) {};
\node[ctverec, right of=3:2] (3:3) {};
\node[ctverec, right of=3:3] (3:4) {};
\node[ctverec, right of=3:4] (3:5) {};
\node[ctverecPlny2, right of=3:5] (3:6) {};

\node[ctverecPlny1, below of=3:0, label={180:3}] (4:0) {};
\node[ctverec, right of=4:0] (4:1) {};
\node[ctverec, right of=4:1] (4:2) {};
\node[ctverec, right of=4:2] (4:3) {};
\node[ctverec, right of=4:3] (4:4) {};
\node[ctverecPlny2, right of=4:4] (4:5) {};
\node[ctverec, right of=4:5] (4:6) {};

\node[ctverec, below of=4:0, label={180:4}] (5:0) {};
\node[ctverec, right of=5:0] (5:1) {};
\node[ctverec, right of=5:1] (5:2) {};
\node[ctverec, right of=5:2] (5:3) {};
\node[ctverecPlny2, right of=5:3] (5:4) {};
\node[ctverec, right of=5:4] (5:5) {};
\node[ctverec, right of=5:5] (5:6) {};

\node[ctverec, below of=5:0, label={180:5}] (6:0) {};
\node[ctverec, right of=6:0] (6:1) {};
\node[ctverec, right of=6:1] (6:2) {};
\node[ctverecPlny2, right of=6:2] (6:3) {};
\node[ctverec, right of=6:3] (6:4) {};
\node[ctverec, right of=6:4] (6:5) {};
\node[ctverec, right of=6:5] (6:6) {};

\node[ctverecPlny2, below of=6:0, label={180:$\ge6$}] (7:0) {};
\node[ctverecPlny2, right of=7:0] (7:1) {};
\node[ctverecPlny2, right of=7:1] (7:2) {};
\node[ctverec, right of=7:2] (7:3) {};
\node[ctverec, right of=7:3] (7:4) {};
\node[ctverec, right of=7:4] (7:5) {};
\node[ctverec, right of=7:5] (7:6) {};

\node[rotate=90] at (-1.3,-2) {number of vertices in $W$};
\node at (2,1.1) {number of vertices outside $W$};
\end{tikzpicture}
\begin{center}
 \begin{minipage}[c]{0.9\textwidth}
  \caption{\label{fig:6x6example}
  Example of a $7\times7$ $A$-shape.
  All cliques of size 3 will be assigned to yellow (light gray) fields, while cliques of size 8 will be assigned to orange (darker gray) fields.
  }
\end{minipage}
\end{center}
\end{minipage}
 \begin{minipage}[t]{0.49\textwidth}
  \usetikzlibrary{calc,positioning}

\newcommand{\NDist}{.6cm}

\begin{tikzpicture}[node distance=\NDist, font=\scriptsize]
\tikzstyle{ctverec}=[draw, minimum width=\NDist, minimum height=\NDist]
\tikzstyle{ctverecAlert}=[ctverec, fill=red!60]

\newcommand{\NUM}{6}
\pgfmathsetmacro{\NUMm}{int(\NUM - 1)}
\pgfmathsetmacro{\NUMhalf}{(\NUM / 2)*.7}

\node[ctverec, label=0, label={180:0}] (0:0) {0};
\foreach \x[remember=\x as \lastx (initially 0)] in {1,...,\NUMm} {
  \node[ctverec, right of=0:\lastx, label=\x] (0:\x) {0};
}
\node[ctverec, right of=0:\NUMm, label=$\ge\NUM$] (0:\NUM) {0};

\foreach \x[remember=\x as \lastx (initially 0)] in {1,...,\NUMm} {
  \node[ctverec, below of=\lastx:0,label={180:\x}] (\x:0) {\x};
}

\foreach \i[remember=\i as \lasti (initially 0)] in {1,...,\NUMm} {
  \foreach \j[remember=\j as \lastj (initially 0)] in {1,...,\NUM} {
    \node[ctverec, right of=\i:\lastj] (\i:\j) {\i};
  }
}

\node[ctverecAlert, below of=\NUMm:0,label={180:$\ge\NUM$}] (\NUM:0) {?};
\foreach \x in {1,...,\NUMm} {
  \node[ctverecAlert, below of=\NUMm:\x] (\NUM:\x) {?};
}
\node[ctverec, below of=\NUMm:\NUM] (\NUM:\NUM) {\NUM};

\node[rotate=90] at (-1.3,-\NUMhalf) {number of vertices in $W$};
\node at (\NUMhalf,1.1) {number of vertices outside $W$};
\end{tikzpicture}
\begin{center}
 \begin{minipage}[c]{0.9\textwidth}
  \caption{\label{fig:6x6objectiveFunction}%
  An example of uncertainty in computation of objective function.
  The value in the last row depends on the size of the clique we are assigning to those cells.
  The value in the cell is how much we pay for any compatible clique assigned to this cell.
  }
\end{minipage}
\end{center}
\end{minipage}
\end{figure}
Lemma~\ref{lem:computeX} computes a solution of minimal cost for an $A$-shape.
We do this by reducing the task to integer linear programming (ILP) while using non-linear objective.
A fuction $f \colon \RR^p \to \RR$ is \emph{separable convex} if there exist convex functions $f_i \colon \RR \to \RR$ for $i \in [p]$ such that $f(x_1, \ldots, x_p) = \sum_{i = 1}^p f_i(x_i)$.
\begin{theorem}[\cite{oertelWW14} -- simplified]\label{thm:KhachianPorkolab}
Integer linear programming with objective minimization of a separable convex function in dimension $p$ is \FPT with respect to $p$ and space exponential in $L$ the length of encoding of the ILP instance.
\end{theorem}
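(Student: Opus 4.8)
The plan is to deduce this simplified statement from the general theory of integer optimization in a fixed number of variables. The starting observation is that a separable convex function $f(x_1,\dots,x_p)=\sum_{i=1}^p f_i(x_i)$ is in particular convex, hence \emph{quasiconvex}: for every threshold $t\in\RR$ the sublevel set $L_t=\{x\in\RR^p : f(x)\le t\}$ is a convex set, and $P\cap L_t$ is again convex, where $P=\{x : Ax\le b\}$ is the polyhedron given by the linear constraints. Thus the minimization problem $\min\{f(x): x\in P\cap\N^p\}$ reduces, by binary search over the objective value $t$, to a bounded number of \emph{integer feasibility} queries of the form ``does the convex set $P\cap L_t$ contain a lattice point?'' For the binary search to use only $\mathcal O(L)$ rounds it suffices to note that on the integer points of $P$ the function $f$ takes values lying on a grid of granularity and magnitude at most $2^{\mathcal O(L)}$, since $P$ and the pieces $f_i$ are encoded with total length $L$.

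Next I would invoke the fixed-dimension machinery for the feasibility queries. Deciding whether a convex (semialgebraic) set in dimension $p$ with encoding length $L'$ contains an integer point is solvable in time $g(p)\cdot\poly(L')$ by the Lenstra-type approach: compute a lattice-width direction; either the set is thin in that direction, in which case it is sliced into polynomially many translates of lower-dimensional convex sets and one recurses on $p$, or it is wide enough that the flatness theorem guarantees a lattice point. This is exactly the result of Khachiyan and Porkolab for convex semialgebraic sets; alternatively, for our restricted (separable convex) case one can use the algorithm of Oertel, Wagner and Weismantel, which reduces convex integer minimization to a number of integer \emph{linear} programs bounded by a function of $p$, each solvable in fixed dimension by Lenstra's/Kannan's algorithm. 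Combining either route with the outer binary search yields the claimed \FPT running time in $p$; the exponential-space bound is inherited from the reduction to integer linear programs, whose coefficients can be of size exponential in $L$.

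The main obstacle I anticipate is purely a matter of bookkeeping about representations: one must pin down how the separable pieces $f_i$ are given (as explicit piecewise-rational functions, or as evaluation oracles) so that the constraint $f(x)\le t$ has controlled encoding length and so that $L_t$ is a genuine convex semialgebraic set of bounded description complexity; with that fixed, bounding the recursion depth (at most $p$) and the branching (polynomial per level) is routine. A secondary point to handle carefully is precisely the space claim: the recursion tree of the Lenstra-type test has depth $p$ but exponentially many leaves, and the linear programs produced by the convex-to-linear reduction carry coefficients exponential in $L$, so only exponential space — not polynomial space — can be guaranteed, which is exactly what the statement asserts.
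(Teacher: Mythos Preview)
The paper does not prove this theorem at all: it is stated as a black-box citation of an external result (Oertel, Wagner, and Weismantel), with no accompanying argument. There is therefore nothing in the paper to compare your proposal against.

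That said, your sketch is a reasonable high-level account of why such a result holds, and indeed both of the routes you mention---reduction to convex integer feasibility via Khachiyan--Porkolab, or the direct reduction of separable convex integer minimization to a bounded number of ILPs in fixed dimension due to Oertel, Wagner, and Weismantel---are standard ways to obtain the statement. For the purposes of this paper, however, the theorem is used purely as a tool inside the proof of Lemma~\ref{lem:computeX}, and the authors simply invoke it; if you are writing this up, a one-line citation suffices and a proof is neither expected nor needed.
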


\begin{lemma}\label{lem:computeX}
Let $G = (V,E)$ be a graph, $K$ be a twin cover of $G$, and $\emptyset \neq A\subseteq K$.
There is an algorithm that given an $A$-shape $S_A$ of size $r$ computes a solution that agrees with $S_A$ of minimal cost in time $f(|K|, r) \cdot |G|^{O(1)}$ or reports that $S_A$ is not valid.
\end{lemma}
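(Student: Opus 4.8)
The plan is to encode the choice of a valid solution for $S_A$ as an integer linear program whose dimension is bounded by a function of $|K|$ and $r$, and whose objective is the fair cost written as a separable convex function; then invoke Theorem~\ref{thm:KhachianPorkolab}. First I would set up the variables. For each pair $(i,j)$ with $0\le i,j\le r+1$ I introduce a variable $x_{i,j}$ counting how many cliques of $\mathcal C$ are assigned the value $(i,j)$; there are $(r+2)^2$ such variables. The constraint that $\mathrm{sol}$ be compatible with $S_A$ is directly linear: $x_{i,j}=S_A(i,j)$ whenever $S_A(i,j)\le\alpha$, and $x_{i,j}\ge\alpha+1$ whenever $S_A(i,j)=\alpha+1$. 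Validity of the solution (every clique either has $i+j=|C|$, or is ``big'' with $i=r+1$ or $j=r+1$) is encoded by only allowing the pairs $(i,j)$ that are legal for \emph{some} admissible clique size, i.e.\ forcing $x_{i,j}=0$ for the forbidden cells; for the cells with $i=r+1$ (resp.\ $j=r+1$) the actual clique sizes are not determined by the shape, so those variables are further constrained only by how many cliques of size $\ge r$ are available in $G$ with cover set $A$ --- a count we read off from $G$ in polynomial time and put in as one more linear inequality.

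Next I would express the fair cost. A vertex $v$ of $G$ sees, among the vertices of $W$, (a)~the part of $W$ inside $K$, which is fixed once $W\cap K$ is fixed, and (b)~for each clique $C\in\mathcal C'$ (cliques whose cover set contains $v$, or that contain $v$), the quantity $|C\cap W|$, which for the current $A$-block is exactly the coordinate $i$ of $\mathrm{sol}(C)$, capped at $r+1$ if $C$ is big --- but a ``big'' clique can have arbitrarily many vertices of $W$, and here lies the subtlety noted in Figure~\ref{fig:6x6objectiveFunction}: the shape does not pin down $|C\cap W|$ for cliques assigned to row $r+1$. I would handle this by adding, for each row-$(r+1)$ cell, an auxiliary integer variable recording the \emph{total} number of $W$-vertices placed in big cliques of that cell, bounded below by $(r+1)$ times the number of such cliques and bounded above by the number of available vertices; minimizing fair cost will push these as evenly spread as the constraints allow. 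The fair cost itself is then $\max_{v} (\text{const}_v + \sum (\text{contributions}))$; to turn a max into a separable convex objective I introduce one more variable $t$ with linear constraints $t\ge(\text{load at }v)$ for each relevant ``vertex type'' (there are only $2^{|K|}$ distinct neighbourhood types among $V\setminus K$, plus the $|K|$ cover vertices, so polynomially — indeed FPT-many — constraints), and minimize $t$, which is trivially separable convex. Actually since different $A$-blocks interact only through the cover vertices and through these per-type load bounds, I would run the ILP once \emph{globally} over all $A\subseteq K$ simultaneously; the dimension stays $2^{|K|}(r+2)^2+O(2^{|K|})$, i.e.\ bounded by $f(|K|,r)$.

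Finally I would argue correctness: any valid solution of minimal fair cost gives a feasible ILP point of the same value (read off the $x_{i,j}$ and the load variables), and conversely any feasible ILP point can be realized by an actual assignment of cliques to cells — here I use that the shape faithfully records clique multiplicities up to the threshold $\alpha+1$, that for counts $\ge\alpha+1$ we have enough cliques by Lemma~\ref{lem:induction}'s regime to distribute them, and that for the big-clique cells the auxiliary totals were constrained exactly by the true supply of large cliques in $G$. If the ILP is infeasible we report that $S_A$ is not valid. The running time is $f(|K|,r)\cdot L^{O(1)}$ from Theorem~\ref{thm:KhachianPorkolab} with $L=O(\log|G|)$ times the (bounded) number of variables, so $f(|K|,r)\cdot|G|^{O(1)}$ as claimed.

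The main obstacle I expect is the bookkeeping for ``big'' cliques: making the reduction genuinely correct requires that the ILP's freedom in choosing $|C\cap W|$ for a large clique matches exactly the real freedom in $G$ (no more, no less), which is why the auxiliary per-cell total variables and the supply constraints must be set up carefully; everything else — the compatibility constraints, the $\max$-to-$t$ trick, the bound on the dimension — is routine.
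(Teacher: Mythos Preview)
Your high-level framework---a bounded-dimension ILP followed by Theorem~\ref{thm:KhachianPorkolab}---matches the paper, but the treatment of big cliques contains a real gap, and although you correctly flag this as ``the main obstacle'', you do not actually resolve it. Your auxiliary variable recording the total number of $W$-vertices in a row-$(r{+}1)$ cell is \emph{not} free subject to the bounds you propose: for a cell $(r{+}1,j)$ with $j\le r$, any clique $C$ assigned there has $|C\setminus W|=j$ exactly, hence contributes precisely $|C|-j$ selected vertices, and this quantity depends on the \emph{identity} of $C$, not merely on how many cliques land in the cell. Your interval ``between $(r{+}1)\cdot x_{r+1,j}$ and the number of available vertices'' therefore does not carve out the realizable totals, and the optimum of your ILP can undercount the true cost. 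A second, smaller omission is that the variables $x_{i,j}$ alone do not enforce, for each size $s\le r$, that the cells on the diagonal $i+j=s$ together receive exactly the number of size-$s$ cliques in $\mathcal C$; ``forbidding empty cells'' is necessary but not sufficient for validity.

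The paper fixes both issues with one move: it indexes the ILP variables additionally by clique size, writing $x_{i,j}^q$ for $q\in\{1,\dots,2r\}\cup\{\max\}$, which yields $O(r^3)$ variables and makes the per-size counting constraints linear. For $q\le 2r$ the contribution to the cost is then the linear term $q-j$ (or $i$ in the last column). For the unboundedly large cliques $\mathcal C_{\max}$ assigned to the last row, the paper observes that, since we are minimising, the optimal choice is always to send the \emph{smallest} available big cliques there; hence the minimum achievable total, as a function only of how many $\mathcal C_{\max}$-cliques go to the last row, is the partial sum of the sorted size sequence---a convex univariate function. This sorted-prefix-sum argument is precisely the nonlinearity that necessitates Theorem~\ref{thm:KhachianPorkolab} and is the idea your plan is missing. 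Finally, note that the ``cost'' in this lemma is local: the paper simply minimises the total number of selected vertices among cliques with cover set $A$ (this is the contribution to $|N(v)\cap W|$ for any $v\in A$), deferring the combination across blocks to Proposition~\ref{prop:tcfair}; your attempt to minimise the global fair cost via a slack variable $t$ is more than the lemma asks for and, in any case, does not by itself resolve the big-clique bookkeeping.
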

\begin{proof}
Let $\mathcal{C}_A$ be the collection of all cliques such that $A$ is their cover set.
We split the task of finding a minimal solution to $S_A$ into two independent parts depending on the size of cliques assigned in the phase.

The first phase is for cliques in $\mathcal{C}_A$ with sizes at most $r$.
Observe that these can be assigned deterministically in a greedy way.
This is because no cell of $S_A$ is shared by two sizes and we can see that if there are more cells with value $\alpha$ on the corresponding diagonal we can always prefer the top one as this minimizes the cost (see Figure~\ref{fig:6x6objectiveFunction}).
However, this is not possible for larger cliques as they may in general share some cells of $S_A$ and thus we defer them to the second phase.

Now observe that the most important vertices for computing the cost are the vertices constituting the set $A$.
To see this just note that all other vertices see only $A$ and their neighborhood (a clique) which is at least as large as for the vertices in $A$.

It follows that we should only care about the number of selected vertices such that $A$ is their cover set.
Thus if the size of all cliques in $\mathcal{C}_A$ is bounded in terms of $k$ we are done.
Alas, this is not the case.

We split the set $\mathcal{C}_A$ into sets $\mathcal{C}_1, \ldots, \mathcal{C}_{2r}$, and $\mathcal{C}_{\max}$.
A clique $C\in\mathcal{C}$ belongs to $\mathcal{C}_{|C|}$ if $1\le |C| \le 2r$ and belongs to $\mathcal{C}_{\max}$ otherwise.
Note that cliques from $\mathcal{C}_{\max}$ may be assigned only to cells having at least one index $r + 1$.
As mentioned we are about to design an ILP with a non-linear objective function.
This ILP has variables $x_{i,j}^q$ that express the number of cliques from the set $\mathcal{C}_q$ assigned to the cell $(i,j)$ of $S_A$ (that is $1\le i,j\le r+1$ and $q\in Q = \{1,\ldots, 2r\}\cup\{\max\}$).
The obvious conditions are the following (the $\unrhd$ symbol translates to $=$ if $S(i,j) \leq \alpha$ while it translates to $\ge$ if $S(i,j) = \alpha + 1$).
\begin{align*}
  \sum_{q \in Q} x_{i,j}^q &\unrhd S_A(i,j) \qquad\qquad &0\le i,j \le r + 1 \\
  \sum_{0 \le i,j \le r + 1} x_{i,j}^q &= \left| \mathcal{C}_q \right| &\forall q \in Q \\
  x_{i,j}^q &\geq 0 & 0\le i,j \le r + 1, \,\forall q\in Q
\end{align*}
We are about to minimize the following objective
\[
\sum_{0\le i\le r+1;0\le j \le r} \, \sum_{1\le q\le 2r} (q-j) x_{i,j}^q +
\sum_{0\le i\le r + 1} \, \sum_{\forall q} i\cdot x_{i,r+1}^q +
g\left( x_{r + 1, 0}^{\max}, \ldots, x_{r + 1, r}^{\max} \right) \,,
\]
where $g\colon \N^r \to \N$ is a function that has access to sizes of all cliques in $\mathcal{C}_{\max}$ and computes the minimum possible assignment.
We claim that the function $g$ is a separable convex function in variables $x_{r + 1, 0}^{\max}, \ldots, x_{r + 1, r}^{\max}$.
The first summand of the objective function describes the cliques of size at most $2r$.
Their price corresponds to the number of vertices in the clique $q$ minus the number of vertices that are not selected $j$.
The second summand corresponds to the last row, where the cheapest price is always the number of selected vertices $i$.
The last summand, discussed in the following paragraph, describes the assignment to the last row.
The result then follows from Theorem~\ref{thm:KhachianPorkolab} as the number of integral variables is $O(r^3)$.

Observe that the value of $g\left( x_{r + 1, 0}^{\max}, \ldots, x_{r + 1, r}^{\max} \right)$ is equal to sum of sizes of cliques ``assigned to the last row'' minus $\sum_{j = 0}^r j\cdot x_{r + 1, j}^{\max}$.
Now, $g\left( x_{r + 1, 0}^{\max}, \ldots, x_{r + 1, r}^{\max} \right) = g'\left( \sum_{j = 0}^r x_{r + 1, j}^{\max} \right) - \sum_{j = 0}^r j\cdot x_{r + 1, j}^{\max}$.
Since all cliques in $\mathcal{C}_{\max}$ are eligible candidates to be assigned to the last row and since it is always cheaper to assign there those of the smallest size among them we can define $g'$ based only on the number of cliques that are assigned to the last row. %
This finishes the proof since $g'$ is a convex function.
See Subsection~\ref{sec:tcFairPolySpace} for details on polynomial space version.
\end{proof}

Now we are ready to prove the main result of this section.
It essentially follows by the exhaustive search among all possible shapes $S$ such that $\varphi$ is true under $S$ and application of Lemma~\ref{lem:computeX}.

\begin{proposition}\label{prop:tcfair}
Let $G = (V,E)$ be a graph with twin cover $K$ of size $k$.
For an \MSOo formula $\varphi(X)$ with one free variable, it is possible to find a set $W\subseteq V$ such that
\begin{itemize}
  \item
  $G\models \varphi(W)$  and
  \item
  the cost of $W$ is minimized among all subset of $V$ satisfying $\varphi(X)$
\end{itemize}
in time $f(k, |\varphi|)|V|^{O(1)}$ for some computable function $f$ or report that no such $W$ exists.
\end{proposition}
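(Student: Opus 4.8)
The plan is to reduce to an exhaustive search over \emph{shapes}, glued together by Lemma~\ref{lem:shape} and Lemma~\ref{lem:computeX}. Recall that a shape $S$ is a family of $A$-shapes $S_A$ for $A\subseteq K$; to make it describe all of $W$ I additionally record $W_K = W\cap K$ as part of $S$ (only $2^{k}$ possibilities). First I would observe that the total number of shapes is bounded by a function $f_1(k,|\varphi|)$: each $A$-shape is an $(r+2)\times(r+2)$ table with entries in $\{0,\dots,\alpha+1\}$, so there are at most $(\alpha+2)^{(r+2)^2}$ of them, and a shape chooses one such table for each of the $2^{k}$ sets $A$ together with one of the $2^{k}$ sets $W_K$; since $r$ and $\alpha$ depend only on $q_S$ and $q_v$, this count depends only on $k$ and $|\varphi|$.

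The algorithm then iterates over all shapes $S$ and, for each, proceeds in two steps. Step~1: decide whether $\varphi$ holds \emph{under} $S$. It checks that $S$ is realizable (distributing the cliques of $G\setminus K$ over the cells of the $A$-shapes greedily and picking $W_K$) and, if so, builds a representative $W$; by Lemma~\ref{lem:shape} the truth of $G\models\varphi(W)$ depends only on $S$, so it is computed by applying the reduction of Proposition~\ref{prop:mctc} to the labeled graph $(G,W)$ — shrinking it to size bounded by some $f_2(k,|\varphi|)$ — and brute-forcing \MSOo model checking on the reduced instance. Non-realizable shapes, and those under which $\varphi$ fails, are discarded. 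Step~2: for a surviving shape compute the minimum fair cost of a $W$ realizing it. The only remaining freedom is which clique of $G\setminus K$ lands in which cell of the relevant $A$-shape, and this decomposes over cover sets: for each nonempty $A\subseteq K$ I would invoke Lemma~\ref{lem:computeX} on $S_A$ to obtain the minimum value $m_A$ of $\sum_{C\,:\,N(C)=A}|C\cap W|$ and, from the produced solution, the least attainable $\max_{C\,:\,N(C)=A}|C\cap W|$; the case $A=\emptyset$ (not covered by Lemma~\ref{lem:computeX}, but strictly easier, as such cliques are whole components seen by no vertex outside them) is handled analogously. Since the clique-assignment problems for distinct cover sets range over disjoint sets of cliques, all these minima are attained simultaneously by one $W$. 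Finally the fair cost of this $W$ is read off: $|N(u)\cap W| = |N_K(u)\cap W_K| + \sum_{A\ni u} m_A$ for $u\in K$, and $|N(v)\cap W| \le |C\cap W| + |A\cap W_K|$ for $v$ in a clique $C$ with $N(C)=A$; its fair cost is the maximum of these quantities, all functions of $S$, $W_K$, and the outputs of Lemma~\ref{lem:computeX}. The algorithm returns a $W$ of smallest fair cost over all surviving shapes, and reports that no feasible set exists if none survives.

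Correctness holds because every $W$ with $G\models\varphi(W)$ has a shape, under which $\varphi$ holds by Lemma~\ref{lem:shape}, so the optimum is examined, and Lemma~\ref{lem:computeX} finds the cheapest realizing $W$ of each shape. The running time is $f_1(k,|\varphi|)$ shapes, each costing $|V|^{O(1)} + f_2(k,|\varphi|)$ for Step~1 and $2^{k} f(k,r)\,|V|^{O(1)}$ for Step~2, hence $f(k,|\varphi|)\,|V|^{O(1)}$ overall.

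The step I expect to require the most care is the fair-cost assembly in Step~2: one must justify that minimizing the \emph{sum} $\sum_{C:N(C)=A}|C\cap W|$ per cover set (what Lemma~\ref{lem:computeX} optimizes) is the correct local objective even though the global objective is a maximum. This holds because lowering that sum can only decrease $|N(u)\cap W|$ for every $u\in A$ at once, and because within a fixed $A$-shape the minimum-sum assignment also minimizes the per-clique maximum (one defers the smallest cliques to the overflow cells and selects there only as many vertices as the shape forces). One also has to verify that recording $W\cap K$ in the shape suffices to make both $\varphi$-evaluation and the fair-cost computation functions of $S$ alone. If one prefers not to extract per-clique maxima from the ILP, an alternative is to binary-search the target cost $\kappa$ and fold the per-vertex bounds into additional linear constraints, losing only a $\log|V|$ factor.
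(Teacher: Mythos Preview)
Your proposal is correct and takes essentially the same approach as the paper: enumerate $W\cap K$ and all shapes, model-check $\varphi$ under each shape via the reductions behind Proposition~\ref{prop:mctc}, then for each surviving shape invoke Lemma~\ref{lem:computeX} independently per cover set $A$ (handling $A=\emptyset$ separately, as the paper also notes) and take the union. You are if anything more explicit than the paper about the one delicate point---that the fair cost can be attained at a clique vertex, so the per-clique maximum and not only the per-$A$ sum $m_A$ must be controlled---whereas the paper simply asserts that the union ``gives the optimal cost for the selected shape''.
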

\begin{proof}
The high-level idea of the algorithm is as follows.
For every possible selection of $K\cap W$ we generate all possible shapes.
We first check whether the formula $\varphi(X)$ evaluates to true under the shape $S$.
If the shape~$S$ is valid, we compute a solution $W \subseteq V$ that agrees with $S$ having the minimal fair-cost.
Note that this is done separately for every set of twin-cliques $\mathcal{C}_A$ for $A \subseteq K$; for this we use Lemma~\ref{lem:induction}.
Finally, we return the set $W$ minimizing the fair-cost among these (in fact, the union of those computed in the previous step).

\noindent\textbf{Shape evaluation.}
Let $W \cap K$ be fixed and let $S$ be a shape.
We create an auxiliary graph~$H$ with exactly $S_A(i,j)$ cliques of size $i+j$ in $\mathcal{C}_A$ where $A\subseteq K$.
We label vertices of~$H$ based on the shape~$S$ and the initial selection of $W\cap K$, that is, we label exactly $i$ vertices of cliques of size $i+j$ in $\mathcal{C}_A$ and the vertices in $W\cap K$.
Afterwards, we apply Lemma~\ref{lem:induction} and Proposition~\ref{prop:lam} exhaustively on the labeled graph to resolve whether $\varphi(X)$ evaluates to true under $S$.
If $\varphi(X)$ was evaluated to true, then using Lemma~\ref{lem:computeX} on every possible $A$ we obtain a set $W_A$ minimizing the cost for vertices in $A$ and put $W = \cup_{A\subseteq K} W_A$.
Observe that this union gives the optimal cost for the selected shape $S$.
Finally, we return the set $W$ minimizing the cost for any true evaluated shape.
Clearly this routine runs in FPT time with respect to $k$ and $|\varphi|$ as parameters.

\noindent\textbf{Shape realization.}
Now, we focus on the case we are given a valid shape~$S$ (i.e., a shape under which the formula~$\varphi$ holds in~$G$) and we have to find a realization of~$S$ in~$G$ minimizing its fair-cost.
As we have already mentioned, we do this separately in every every set of twin-cliques $\mathcal{C}_A$ for $A \subseteq K$.
We stress that Lemma~\ref{lem:computeX} applies only to the cases twin-cliques $\mathcal{C}_A$ for a nonempty $A \subseteq K$; thus, it remains to find a realization of~$S$ in these cliques.
We note that in this case the objective is a bit different -- we want to minimize the maximum number of selected vertices in any of these clique (as opposed to the case solved in Lemma~\ref{lem:induction} where we focus on minimizing the sum of the number of selected vertices).
It is not hard, however, to do this via standard tricks using the same model as presented in the proof of Lemma~\ref{lem:computeX}, since this is in fact minimization of a maximum (which we can do at a cost of one new auxiliary real-valued variable).

In order to summarize the running time we have
\begin{itemize}
  \item
  at most $2^k$ possible selections for $W\cap K$ and
  \item
  at most $2^k\cdot(\alpha+1)^{(r + 1)^2}$ possible shapes (for each such selection).
\end{itemize}
For every such guess we perform evaluation and (possibly) realization which both can be done in FPT for parameter~$\|\varphi\| + k$; recall that both $\alpha$ and $r$ are functions of the parameter.
This finishes the proof.
\end{proof}

\subsubsection{Polynomial Space Version of Proposition~\ref{prop:tcfair}}\label{sec:tcFairPolySpace}
We now argue that it is possible to implement Lemma~\ref{lem:computeX} in polynomial space via reducing it to integer linear programming.
We note that similar application of this technique is recently presented by Bredereck et al.~\cite{BredereckFNST17}.
\begin{theorem}[Lenstra \& Frank, Tardós~\cite{Lenstra83,FrankTardos87}]
There is an algorithm that given an ILP with $p$ variables finds an optimal solution to it using $O(p^{2.5p}\poly(L))$ arithmetic operations and space polynomial in $L$, where $L$ is the bitsize of the ILP.
\end{theorem}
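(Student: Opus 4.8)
The plan is to prove the feasibility version of integer programming in fixed dimension and then reduce optimization to it. Given the instance $\min\{c^\top x : Ax\le b,\ x\in\mathbb{Z}^p\}$, I would first reduce optimization to feasibility by binary search on the objective value: a finite optimum is attained at a rational of bitsize $\poly(L)$, so $O(\poly(L))$ feasibility queries for systems of the form $Ax\le b,\ c^\top x\le t$ suffice. It therefore suffices to decide, within the stated resource bounds, whether a rational polytope $P=\{x:Ax\le b\}$ contains a lattice point.

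For feasibility I would follow Lenstra's geometry-of-numbers argument. First, compute an affine transformation $\tau$ (via a Löwner--John ellipsoidal approximation of $P$) so that $\tau(P)$ is sandwiched between two concentric balls whose radius ratio is bounded by a function of $p$ only; under $\tau$ the standard lattice $\mathbb{Z}^p$ becomes some lattice $\Lambda$. Next, compute a reduced basis of $\Lambda$ by lattice basis reduction. The dichotomy is then the following: either the inner ball is large relative to the covering radius of $\Lambda$ (read off from the reduced basis), in which case $P$ provably contains a lattice point that can be rounded out explicitly; or, by the flatness theorem, $P$ is thin, meaning there is a nonzero dual lattice vector $c$ with $\max_{x\in P}c^\top x-\min_{x\in P}c^\top x\le\omega(p)$ for a flatness bound $\omega(p)$ depending only on $p$, and the reduced basis exhibits such a $c$.

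In the thin case the linear form $c^\top x$ takes at most $\omega(p)+1$ integral values on lattice points of $P$; for each such integer $t$ I substitute $c^\top x=t$, obtaining a $(p-1)$-dimensional integer program on the corresponding hyperplane slice, and recurse. Using a sufficiently strong reduction (LLL together with Kannan's refinement) the constructive flatness bound can be taken polynomial in $p$, so the branching factor is $p^{O(1)}$ and the recursion depth is $p$; the search tree then has $p^{O(p)}$ leaves, and charging each node the cost of the ellipsoid and reduction steps ($\poly(L)$ arithmetic operations) gives an arithmetic-operation count of the form $p^{2.5p}\poly(L)$.

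The one delicate point is keeping the encoding length from blowing up as constraints accumulate along the recursion; this is precisely the Frank--Tardós contribution. I would invoke their preprocessing, which by simultaneous Diophantine approximation replaces each constraint vector by an integer vector of bitsize $\poly(p)$ inducing the same sign pattern on all relevant points, so the per-node bitsize stays $\poly(L)$ and the total arithmetic cost remains $O(p^{2.5p}\poly(L))$. For the space bound I would explore the search tree depth-first, storing only the current root-to-leaf path of length $p$ together with the $\poly(L)$-size data at the active node, which yields space polynomial in $L$. The main obstacle, and the technical heart of the whole argument, is the flatness theorem combined with lattice reduction: proving that a lattice-point-free convex body has bounded lattice width and turning a reduced basis into an explicit thin direction with a $p^{O(1)}$ width bound, which is exactly what forces the stronger reduction and drives the $p^{2.5p}$ dependence.
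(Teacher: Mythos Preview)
The paper does not prove this statement at all: it is quoted as a black-box result from the literature (Lenstra~\cite{Lenstra83} for the feasibility algorithm in fixed dimension, and Frank and Tard\'os~\cite{FrankTardos87} for the improved dependence $p^{2.5p}$ and the bit-size control), and is invoked only to argue that Lemma~\ref{lem:computeX} can be implemented in polynomial space. So there is no ``paper's own proof'' to compare your proposal against.

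That said, your outline is a faithful high-level summary of the argument in those references: binary search to reduce optimization to feasibility, L\"owner--John rounding, lattice basis reduction, the flatness dichotomy, branching on $p^{O(1)}$ hyperplane slices to depth $p$, and the Frank--Tard\'os simultaneous-Diophantine preprocessing to keep the encoding length polynomial along the recursion. One small attribution point: the specific $p^{2.5p}$ bound is due to Frank and Tard\'os' analysis (not to LLL plus Kannan, which yields a different $p^{O(p)}$ constant), and their preprocessing is applied once at the outset to replace the input by an equivalent one of bitsize $\poly(p)$, rather than repeatedly at each node; with that caveat your sketch is correct and matches the cited sources.
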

\begin{remark}
Let $G = (V,E)$ be a graph, $K$ be a twin cover of $G$, and $A\subseteq K$.
There is an algorithm that given an $A$-shape $S_A$ of size $r$ computes a solution that agrees with $S_A$ of minimal cost in time $f(|K|, r) \cdot |G|^{O(1)}$ and space polynomial in $|G|$ or reports that $S_A$ is not valid.
\end{remark}
\begin{proof}
Here the only difference is that we have to rewrite the function $g(x_{r + 1, 1}^{\max}, \ldots, x_{r + 1, r}^{\max})$ using a new variable $y$ (representing its value) into constraints of ILP and thus obtain a linear objective.

We do this by adding a variable $y \ge 0$ with constraint $y = \sum_{j = 1}^{r} x_{r + 1, j}^{\max}$ and work with univariate function $g(y)$ instead.
We order cliques in $\mathcal{C}_{\max}$ according to their size, that is, $|C_1| \le |C_2| \le \cdots \le |C_t|$, where $t$ denotes the size of $\mathcal{C}_{\max}$.
By $c_i$ we denote the sum $\sum_{j = 1}^i |C_i|$ and note that $g(y) = c_y$.
Finally we introduce a variable $g_y$ representing the value of $g(y)$ and add constraints
\[
g_y \ge (y-i)c_i  \qquad\qquad \forall 1\le i\le t \,.
\]
Our result then follows as $t\le |V|$ and thus we add at most $|V|$ new constraints.
\end{proof}

\section{The \textsc{Fair VC} problem}\label{sec:fairVC}
\subsection{Hardness for Treedepth and Feedback Vertex Set}
We begin with several simple observations about the fair objective value $k$ in the \textsc{Fair VC} problem. %
Observe the following.
\begin{observation}\label{obs:fairVC-preselectedVertex}
Let $G = (V,E)$ be a graph and $U\subseteq V$ be a vertex cover of $G$ with fair objective $k$, that is, $\forall v\in V$ it holds that $|N(v)\cap U|\le k$.
If $v\in V$ has $\deg(v)\ge k$, then $v\in U$.
\end{observation}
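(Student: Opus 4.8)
The plan is to prove the \emph{contrapositive}, pinning down $|N(v)\cap U|$ exactly for an excluded vertex and then comparing it against the fair budget. The only structural property I would invoke is that a vertex cover meets every edge. So I would first fix a vertex $v$ with $v\notin U$ and inspect the edges incident to it: every edge $\{v,u\}\in E$ must be covered by $U$, and since $v\notin U$ the only available endpoint is $u$, which forces $u\in U$. Carrying this out for all neighbours at once gives $N(v)\subseteq U$, hence the clean identity
\[
|N(v)\cap U| \;=\; |N(v)| \;=\; \deg(v).
\]

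The second step is then purely arithmetic. The hypothesis that $U$ is a vertex cover with fair objective $k$ means exactly that $|N(w)\cap U|\le k$ for every $w\in V$; instantiating $w=v$ together with the identity above yields $\deg(v)\le k$ for \emph{every} excluded vertex~$v$. Contraposing, any vertex with $\deg(v)>k$ is immediately forced into $U$, and the remaining task is to push this forcing down to the stated threshold $\deg(v)\ge k$. I would present the clean part as the short chain --- cover every edge $\Rightarrow N(v)\subseteq U \Rightarrow |N(v)\cap U|=\deg(v) \Rightarrow \deg(v)\le k$ --- which needs no induction or case analysis.

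The one point needing genuine care --- and the main obstacle --- is the boundary value $\deg(v)=k$. Everywhere strictly above the threshold the chain forces $v\in U$ outright, because $\deg(v)>k$ contradicts $\deg(v)\le k$; exactly at $\deg(v)=k$ the excluded vertex $v$ \emph{saturates} its fair budget, the inequality $\deg(v)\le k$ becomes an equality, and the forcing no longer follows from the arithmetic alone. This saturated case is therefore the step I would isolate and verify with care --- resolving, at the tight value $\deg(v)=k$, whether an excluded vertex can coexist with $U$ realising fair objective $k$ at the position of $v$ --- and it is the sole obstacle I expect in turning the short covering chain into a complete proof of the statement as worded.
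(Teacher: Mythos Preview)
Your covering chain is exactly the intended argument, and the paper in fact offers no proof at all --- the statement is labelled an Observation and left bare. So on the substance there is nothing to compare: your contrapositive, $v\notin U \Rightarrow N(v)\subseteq U \Rightarrow |N(v)\cap U|=\deg(v)\le k$, is the whole story.

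Your caution at the boundary is not merely well-placed; it exposes a genuine slip in the statement. At $\deg(v)=k$ the conclusion is \emph{false}: take the star $K_{1,k}$ with center $c$ and leaves $l_1,\dots,l_k$, and let $U=\{l_1,\dots,l_k\}$. This is a vertex cover, its fair cost is exactly $k$ (realised at $c$), the center has $\deg(c)=k$, yet $c\notin U$. So no amount of additional care will close the gap you isolated --- the correct hypothesis is $\deg(v)>k$ (equivalently $\deg(v)\ge k+1$), which your chain proves cleanly. The paper's own usage immediately after the observation confirms this reading: it enforces $v\in U$ by attaching $k+1$ pendant leaves, i.e., it only ever invokes the strict-inequality version.
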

Note that we can use Observation~\ref{obs:fairVC-preselectedVertex} to enforce a vertex $v$ to be a part of the fair vertex cover by attaching $k+1$ degree 1 vertices to $v$.
Observe further that we may adjust (lower) the global budget $k$ for individual vertex $v$ by attaching vertices to $v$ and then attaching $k$ new leaves to the newly added vertices.
To this end if the above operations are applied to a graph $G$ resulting in a graph $G'$, then $\td(G')\le\td(G) + 2$ and $\fvs(G') = \fvs(G)$.

We observe a substantial connection between \textsc{Fair VC} and \textsc{Target Set Selection} (TSS).
It is worth mentioning that \textsc{Vertex Cover} can be formulated in the language of TSS by setting the threshold to $\deg(v)$ for every vertex $v$.
As a result, our reduction given here is, in certain sense, dual to the one given by Chopin et al.~\cite{ChopinNNW14} for the TSS problem.
However, we will show that the structure of the solution for \textsc{Fair VC} is, in fact, the complement of the structure of the solution for TSS given therein.
The archetypal \W{1}-hard problem is the \textsc{$\ell$-Multicolored Clique} problem~\cite{CyganFKLMPPS15}:
\prob{\textsc{$\ell$-Multicolored Clique}\hspace{20em} {\em Parameter:} $\ell$}
{An $\ell$-partite graph $G=(V_1 \cup \cdots \cup V_\ell,E)$, where $V_c$ is an independent set for every $c\in [\ell]$ and they are pairwise disjoint.}
{Is there a clique of the size $\ell$ in $G$?}

\begin{figure}[tb]
  \begin{center}
    \newcommand{\fairVCNodeDistance}{.65cm}
\usetikzlibrary{positioning,calc,fit}
\usetikzlibrary{decorations.pathreplacing}

\begin{tikzpicture}[node distance=\fairVCNodeDistance]
  \tikzstyle{dummyVrchol}=[inner sep=1pt]
  \tikzstyle{vrchol}=[circle, draw, inner sep=1pt]
  \tikzstyle{emptyVrchol}=[circle, draw, inner sep=3pt]
  \tikzstyle{fitting}=[rounded corners, draw, inner sep=5pt]
  \tikzstyle{hrana}=[thick]
  \tikzstyle{selected}=[fill=gray!80]

  \node[vrchol] (Va1) {1};
  \node[vrchol, below of=Va1] (Va2) {2};
  \node[vrchol, below of=Va2] (Va3) {$i$};
  \node[vrchol] at ($(Va3) - 2*(0,\fairVCNodeDistance)$) (Va4) {$n$};
  \node[dummyVrchol] at ($(Va3)!.5!(Va4.north)$) (VaD) {$\vdots$};
  \node[fitting, fit=(Va1)(Va2)(Va3)(Va4),label={90:$V_a$}] {};

  \node[emptyVrchol, label={[xshift=-3pt]90:guard}, selected, label={[xshift=-4pt]270:$n-1$}] at ($(Va3) - 2*(\fairVCNodeDistance,0)$) (VaGuard) {};

  \foreach \vrchol in {(Va1),(Va2),(Va3),(Va4)} {
     \draw[hrana] (VaGuard) -- \vrchol;
  }

  \node[emptyVrchol] at ($(Va1) + 2*(\fairVCNodeDistance, 0) + 2*(0, \fairVCNodeDistance)$) (VaILow1) {};
  \node[emptyVrchol, below of=VaILow1] (VaILow2) {};
  \node[emptyVrchol, below of=VaILow2] (VaILow3) {};
  \node[fitting, fit=(VaILow1)(VaILow2)(VaILow3), label={90:$i$},label={55:lower}] {};

  \node[emptyVrchol] at ($(VaILow3) - 3*(0,\fairVCNodeDistance)$) (VaIHigh1) {};
  \node[emptyVrchol, below of=VaIHigh1] (VaIHigh2) {};
  \node[emptyVrchol, below of=VaIHigh2] (VaIHigh3) {};
  \node[emptyVrchol] at ($(VaIHigh3) - 2*(0,\fairVCNodeDistance)$) (VaIHigh4) {};
  \node[dummyVrchol] at ($(VaIHigh3)!.5!(VaIHigh4.north)$) {$\vdots$};
  \node[fitting, fit=(VaIHigh1)(VaIHigh2)(VaIHigh3)(VaIHigh4), label={90:$n-i$},label={290:upper}] {};

  \node[vrchol, selected] at ($(VaILow3) + 3*(\fairVCNodeDistance, 0) - (0, \fairVCNodeDistance)$) (Iab1) {$c^1_{ab}$};
  \node[vrchol, selected, below=of Iab1] (Iab2) {$c^2_{ab}$};
  \node[fitting, fit=(Iab1)(Iab2), label={270:$n$}] {};

  \foreach \vrchol in {(VaILow1),(VaILow2),(VaILow3)} {
    \draw[hrana] (Va3) -- \vrchol;
    \draw[hrana] (Iab1) -- \vrchol;
  }
  \foreach \vrchol in {(VaIHigh1),(VaIHigh2),(VaIHigh3),(VaIHigh4)} {
    \draw[hrana] (Va3) -- \vrchol;
    \draw[hrana] (Iab2) -- \vrchol;
  }

  \node[emptyVrchol] at ($(VaILow1) + 6*(\fairVCNodeDistance, 0)$) (IabIHigh1) {};
  \node[emptyVrchol, below of=IabIHigh1] (IabIHigh2) {};
  \node[emptyVrchol, below of=IabIHigh2] (IabIHigh3) {};
  \node[emptyVrchol] at ($(IabIHigh3) - 2*(0,\fairVCNodeDistance)$) (IabIHigh4) {};
  \node[dummyVrchol] at ($(IabIHigh3)!.5!(IabIHigh4.north)$) {$\vdots$};
  \node[fitting, fit=(IabIHigh1)(IabIHigh2)(IabIHigh3)(IabIHigh4), label={90:$n-i$}, label={110:$a$-upper}] {};

  \node[emptyVrchol]  at ($(IabIHigh4) - 3*(0,\fairVCNodeDistance)$) (IabILow1) {};
  \node[emptyVrchol, below of=IabILow1] (IabILow2) {};
  \node[emptyVrchol, below of=IabILow2] (IabILow3) {};
  \node[fitting, fit=(IabILow1)(IabILow2)(IabILow3), label={90:$i$}, label={[yshift=.2cm]230:$a$-lower}] {};

  \node[vrchol] at ($(IabIHigh1) + 5*(\fairVCNodeDistance, 0)$) (Eab1) {1};
  \node[vrchol, below of=Eab1] (Eab2) {2};
  \node[vrchol, below of=Eab2] (Eab3) {3};
  \node[vrchol, below of=Eab3] (Eab4) {4};
  \node[vrchol, below of=Eab4] (Eab5) {$q$};
  \node[vrchol] at ($(Eab5) - 2*(0, \fairVCNodeDistance)$) (Eab6) {$m$};
  \node[dummyVrchol] at ($(Eab5)!.5!(Eab6.north)$) {$\vdots$};
  \node[fitting, fit=(Eab1)(Eab2)(Eab3)(Eab4)(Eab5)(Eab6),label={90:$E_{\{a,b\}}$}] {};

  \foreach \vrchol in {(IabILow1),(IabILow2),(IabILow3)} {
    \draw[hrana] (Iab2) -- \vrchol;
    \draw[hrana] (Eab5) -- \vrchol;
  }
  \foreach \vrchol in {(IabIHigh1),(IabIHigh2),(IabIHigh3),(IabIHigh4)} {
    \draw[hrana] (Iab1) -- \vrchol;
    \draw[hrana] (Eab5) -- \vrchol;
  }

  \node[emptyVrchol, label={[xshift=5pt]90:guard}, label={[xshift=5pt]270:$m-1$}, selected] at ($(Eab4) + 3*(\fairVCNodeDistance,0)$) (EabGuard) {};
  \foreach \vrchol in {(Eab1),(Eab2),(Eab3),(Eab4),(Eab5),(Eab6)} {
    \draw[hrana] (EabGuard) -- \vrchol;
  }

  \node[dummyVrchol] at ($(IabILow3)!.5!(Eab6) - (0,.4cm)$) (dLj1) {};
  \node[dummyVrchol] at ($(dLj1) - (0,.2cm)$) (dLj2) {};
  \node[dummyVrchol] at ($(dLj2) - (0,.2cm)$) (dLj3) {};

  \node[dummyVrchol] at ($(dLj3) - (0,.5cm)$) (dHj1) {};
  \node[dummyVrchol] at ($(dHj1) - (0,.2cm)$) (dHj2) {};
  \node[dummyVrchol] at ($(dHj2) - (0,.2cm)$) (dHj3) {};
  \node[dummyVrchol] at ($(dHj3) - (0,.2cm)$) (dHj4) {};
  \foreach \vrchol in {(dLj1),(dLj2),(dLj3),(dHj1),(dHj2),(dHj3),(dHj4)} {
     \draw[hrana] (Eab5) -- \vrchol;
  }

  \node at ($(dLj2) - (.5cm,0)$) {$n-j$};
  \node at ($(dHj2) - (.12cm,0)$) {$j$};
\end{tikzpicture}
  \end{center}
  \caption{An overview of the reduction in the proof of Theorem~\ref{thm:fairVCisWHwrtTD+FVS}. The gray vertices are enforced to be a part the fair vertex cover. If a vertex fair objective was lowered, then the resulting threshold is beneath the vertex (the group of vertices).}\label{fig:fairVChardnessBig}
\end{figure}

\begin{proof}[Proof of Theorem~\ref{thm:fairVCisWHwrtTD+FVS}]
Let $G=(V_1 \cup \cdots \cup V_\ell,E)$ be an instance of the \textsc{$\ell$-Multicolored Clique} problem and let $n = \left| V_i \right|$ for all $i\in [\ell]$.
We denote by $E_{\{a,b\}}$ the set of edges between $V_a$ and $V_b$ and by $m = |E_{\{a,b\}}|$.
We will describe graph $H = (W,F)$ that together with $k = \max(m - 1, 2n)$ will form an equivalent instance of the \textsc{Fair VC} problem.
The reduction has the following properties:
\begin{itemize}
  \item $|W| = \poly(n, k)$ and $|F| = \poly(m, k)$,
  \item $\td(H) = \bigO{\ell^2}$, and $\fvs(H) = \bigO{\ell^2}$.
\end{itemize}
For an overview of the reduction please refer to Figure~\ref{fig:fairVChardnessBig}.
There are three types of gadgets in our reduction, namely the vertex selection gadget (one for each vertex), the edge selection gadget (one for each edge), and the incidence check gadget (one for each vertex--edge incidence).
We start by enumerating the vertices in each color class by numbers from $[n]$ and edges by numbers in $[m]$.
Throughout the proof $a,b$ are distinct numbers from $[\ell]$.

The $V_a$ selection gadget consists of $n$ {\em choice} vertices (representing the color class $V_a$), a special vertex called {\em guard}, and a group of $n^2$ {\em enumeration} vertices.
The guard vertex is connected to all choice vertices, it is enforced to be a part of the fair vertex cover, and its budget is lowered so that at most $n-1$ choice vertices can be in any fair vertex cover.
The $i$-th choice vertex is connected to $n$ enumeration vertices.
For each choice vertex there are $n$ such vertices and so these are private for vertex $i$.
We further divide these vertices into two parts -- the {\em lower part} consists of $i$ vertices and the {\em upper part} consists of $n-i$ vertices.

The $E_{\{a,b\}}$ selection gadget consists of $m$ {\em choice} vertices, a special vertex called {\em guard}, and a group of $2nm$ {\em enumeration} vertices and is constructed analogously to the vertex selection gadget.
If the $q$-th edge of $E_{\{a,b\}}$ connects $i$-th vertex in $V_a$ and $j$-th vertex in $V_b$, there are (private) $2n$ numeration vertices are connected to the $q$-th choice vertex.
These are split into four groups -- {\em lower $a$-part} consisting of $i$ vertices, {\em upper $a$-part} consisting of $n-i$ vertices, and similarly {\em lower} and {\em upper $b$-parts}.

The $ab$-incidence check gadget consist of two vertices $c^1_{ab}$ and $c^2_{ab}$.
Both $c^1_{ab}$ and $c^2_{ab}$ are enforced to be a part of the solution and with a lowered budget in a way that at most $n$ vertices in the neighborhood of each of them can be part of any fair vertex cover.
The vertex $c^1_{ab}$ is connected to every lower part vertex in the selection gadget for $V_a$ and to every upper $a$-part vertex in the selection gadget for $E_{\{a,b\}}$.
The vertex $c^2_{ab}$ is connected to every upper part vertex in the selection gadget for $V_a$ and to every lower $a$-part vertex in the selection gadget for $E_{\{a,b\}}$.

This finishes the construction of $H$.
Now observe that if we remove vertices $c^1_{ab}, c^2_{ab}$ from $H$, then each component of the resulting graph is a tree (rooted in its guard vertex) of depth at most 3.
It follows that $\td(H) = \bigO{\ell^2}$ and $\fvs(G) = \bigO{\ell^2}$, as so is the size of the removed set of vertices.
We finish the proof by showing that the two instances are equivalent.

Suppose $(G,\ell)$ is a yes-instance which is witnessed by a set $K\subseteq V_1\times\cdots\times V_\ell$.
We now construct a vertex cover $C_K$ of $H$ having $|N(w)\cap C_K|\le k$ for all $w\in W$.
The set $C_K$ contains the following:
\begin{itemize}
  \item all enforced vertices (including all guard and check vertices),
  \item if $v\in V_a\cap K$ is the $i$-th vertex of $V_a$, then all selection vertices of $V_a$ but the vertex $i$ are in $C_K$ and lower and upper enumeration vertices of $i$ are in $C_K$,
  \item if $v\in V_a\cap K$ and $u\in V_b\cap K$ are adjacent through $q$-th edge of $E_{\{a,b\}}$, then all selection vertices of $E_{\{a,b\}}$ but the vertex $q$ are in $C_K$ and $q$'s enumeration vertices are in $C_K$.
\end{itemize}
For the other direction we prove that a vertex cover $C$ in $H$ fulfils $|N(w)\cap C|\le k$ for all $w\in W$ if it corresponds to a clique in $G$.
For the other direction suppose that there is a vertex cover $C$ in $H$ such that $|N(w)\cap C|\le k$ for all $w\in W$.
Recall that $C$ has to contain all enforced vertices.
This implies that at least 1 choice vertex for $V_a$ is not in $U$; we will show that exactly $1$ such choice vertex is in $U$.
The same holds for the edge choice vertices.
To see this suppose for contradiction that $2$ choice vertices for $V_a$ (vertex $i$ and $j$) are not in $U$.
Because $U$ is a vertex cover of $H$ it follows that their enumeration vertices must belong to $U$.
But now take vertices $c^1_{ab}, c^2_{ab}$.
In their neighborhood $U$ have at least $3n$ vertices ($2n$ from the $V_a$'s numeration part and $n$ from the $a$-numeration part of $E_{\{a,b\}}$).
This is absurd as these vertices (due to the lowered budget) can have at most $2n$ vertices in their neighborhood and thus at least one of them exceeds its budget.
Thus the selection gadgets actually encode some selection of vertices $v_a$ and edges $e_{a,b}$.
To finish the proof we have to observe that both $c^1_{ab}$ and $c^2_{ab}$ have at most (in fact, exactly) $n$ neighbors in $U$ if and only if the vertex $v_a$ is incident to the edge $e_{a,b}$.
If this holds for all possible combinations of $a,b$, then we have selected a clique in the graph $G$.

It remains to discuss the ETH based lower-bound.
This immediately follows from our reduction and the result of Chen et al.~\cite{chenCFHJKX05} who proved that there is no $f(k)n^{o(\ell)}$ algorithm for \textsc{$\ell$-Multicolored Clique} unless ETH fails.
Since we have $\td(G)+\fvs(G) = O(\ell^2)$ in our reduction, the lower-bound follows.
\end{proof}

\subsection{FPT algorithm for Modular Width}

Since the algebraic expression $A$ of width $\mw(G)$ can be computed in linear time~\cite{TCHP08}, we can assume that we have $A$ on the input.
We construct the rooted ordered tree $\mathcal{T}$ corresponding to $A$.
Each node $t \in \mathcal{T}$ is assigned a graph $G^t \subseteq G$, that is, the graph constructed by the subexpression of $A$ rooted at $t$.
Suppose we are performing substitution operation at node $t$ with respect to template graph $T$ and graphs $G_1,\ldots, G_r$.
Denote the resulting graph $G^t$ and denote by $n_i$ the size of $V(G_i)$.

The computation will be carried out recursively from the bottom of the tree $\mathcal{T}$.
We first describe the structure of all vertex covers $C$ in $G^t$.
Since there are at most $2^r$ vertex covers of $T$, we try all of these.
Furthermore, every set $C \cap V(G_i)$ must be a vertex cover of $G_i$.
We will describe the fair cost of the cover $C$ in terms of fair costs and sizes of the sets $C \cap V(G_i)$.
Based on some of their properties we are going to derive, we will design a dynamic program that computes the solution.

\begin{proof}[Proof of Theorem~\ref{thm:fairVC}]
The edges in $G^t$ between two vertices of $G_i$ will be referred to as \emph{old edges}, the edges between $G_i$ and $G_j$ for $i \neq j$ (i.e., edges introduced by the template operation) will be referred to as \emph{new edges}.

The computation is carried out recursively from the bottom of the tree~$\mathcal{T}$.

We first describe the structure of all vertex covers $C$ in $G^t$.
Observe that if $ij \in E(T)$ then at least one of $V(G_i), V(G_j)$ must be a subset of $C$; otherwise there would be a new edge not covered by $C$. From this we can see that the set $C_T \df \setof{i}{V(G_i) \subseteq C}$ is a vertex cover of the template graph $T$. We call the $C_T$ the \emph{type of the vertex cover $C$}. Furthermore, every set $C \cap V(G_i)$ must be a vertex cover of $G_i$ (otherwise there would be an old edge uncovered by $C$).

We now describe the fair cost of the cover $C$ in terms of fair costs and sizes of the sets $C \cap V(G_i)$. Denote by $c_i$ the size $|C \cap V(G_i)|$ and by $f_i$ the fair cost of $C \cap V(G_i)$ in $G_i$.
The \emph{fair cost of $C$ in $W \subseteq V(G)$} is defined as $\max_{v\in W} |C \cap N(v)|$.
For $i \in [r]$ the fair cost of $C$ in $V(G_i)$ can be written as
$ f_i + \sum_{j:ij\in E(T)} c_j. $
Clearly, fair cost of $C$ is the maximum of the last expression over $i \in [r]$.

If we know the type $C_T$ of the cover $C$ this can be simplified based on whether $i$ lies in $C_T$. If $i \in C_T$ then $f_i$ is $\Delta(G_i)$ (the maximal degree of $G_i$). If on the other hand $i \notin C_T$ then all its neighbors $j$ are in $C_T$ and in this case $c_j = n_j$. Combining those observations, we have
\[
	\text{fair cost of $C$ in $G_i$} =
	\begin{cases}
		\Delta(G_i) + \sum_{j\notin C_T:ij\in E(T)} c_j + \sum_{j\in C_T:ij\in E(T)} n_j & i  \in C_T, \\
		f_i + \sum_{j:ij\in E(T)} n_j & i \notin C_T. \\
	\end{cases}
\]

For each node $t$ of the tree $\mathcal{T}$ we keep an $|V(G^t)|$ table $\Tab^t$ of integer values from $[n]\cup \infty$.
The value at position $\Tab^t[p]$ is the smallest size of a cover in $G^t$ of fair cost $p$ or $\infty$ if such cover do not exists.

The computation of $\Tab$ in leaves of $\mathcal{T}$ is trivial.
We describe how to compute value $\Tab^t[p]$ given that we know $\Tab_i$ in all children of $t$.
It is enough to determine whether there exists a vertex cover $C$ of $G^t$ of fixed type; we can simply iterate over all types as there are at most $2^r$ of them.

Fix a type $C_T$.
The cover $C$ of type $C_T$ and fair cost at most $p$ and $\Tab^t[p]\neq\infty$ exists if and only if for every $i \notin C_T$ there is a vertex cover $C_i$ of $G_i$ of fair cost $p_i$ such that $\Tab_i[p_i]\neq\infty$.
Moreover, we require that the values $p_i$ satisfy the following inequalities:
\begin{align}
	p &\geq \Delta(G_i) + \sum_{j\notin C_T:ij \in E(T)} \Tab_j[p_i] + \sum_{j\in C_T:ij \in E(T)} n_j & \ \forall i \in C_T, & \label{eq:first} \\
	p &\geq p_i + \sum_{j:ij \in E(T)} n_j & \ \forall i \notin C_T, & \label{eq:second} \\
	\Tab^t[p] &\geq \sum_{j \notin C_T} \Tab_j[p_i] + \sum_{j \in C_T} n_j. &  & \label{eq:third}
\end{align}
First, for every $i \notin C_T$ we set $p_i$ to the highest possible value without violating the inequality~\eqref{eq:second}, that is, $p_i := p - \sum_{j:ij \in E(T)} n_j$.
Note that this is always a safe choice; $p_i$ does not appear anywhere else in the constraints and choosing the highest possible value to give us greatest freedom due to the monotonicity of the table.
Clearly, if any such $p_i$ is negative we know that given constraints cannot be satisfied and there is no vertex cover $C$ of given type and fair cost.

If for any $i$ $\Tab_i[p_i]=\infty$, then there is no vertex cover $C_i$ in $G_i$ of fair cost $p_i$.
This means that we cannot find cover $C$ of given type and fair cost so we set $\Tab_i[p_1]=\infty$. %
We check whether inequalities~\eqref{eq:first} holds.
If not, set  $\Tab_i[p_1]=\infty$.
Otherwise we set $\Tab^t[p]$ be equal to the expression in~\eqref{eq:third} on the right side.
We claim that there is a vertex cover $C$ of given type  and fair cost; we can set
	$C = \bigcup_{i \in C_T} V(G_i) \cup \bigcup_{i \notin C_T} C_i,$
where $C_i$ is any vertex cover of $G_i$ of fair cost $p_i$ and size $\Tab_i[p_i]$ (this is guaranteed to exist because $\Tab_i[p_i]$ was not $\infty$. It is straightforward to check that $C$ has required properties. Moreover, from our choice of values $p_i$ it follows that a vertex cover of type $C_T$, fair cost $p$ and size $\Tab^t[p]$ exists if and only if the described procedure finds values of $p_i$. By iterating over all types $C_T$ we can fill the value $\Tab[p]$ as required.

To complete the description of the algorithm, it is enough to look whether there is not $\infty$ value in $\Tab_{\text{root}}[k]$, where $k$ was the desired fair cost.%

The running time is $n$ for the induction over expression $A$ times $2^r$ different type of covers in any single node times filling the table of size at most $n$  times $nr$ for determining the correct values $p_i$ and checking other inequalities for every $i\in[r]$. This altogether yields a $2^r rn^3$ time algorithm.
\end{proof}

\begin{proof}[Proof of Lemma~\ref{lem:FairVCpolykernelMW}]
First, we observe that modular width is trivially compositional, that is, for any two graphs $G_1, G_2$ it holds that $\mw(G_1 \dot\cup G_2) = \max(\mw(G_1), \mw(G_2))$, where $\dot\cup$ denotes the disjoint union.
Indeed this follows from the fact that disjoin union is one of the operations not affecting modular width.
Now, it remains to show that \textsc{Fair VC} is AND-compositional, see~\cite[Chapter~15]{CyganFKLMPPS15}; the rest then follows from the framework of Bodlaender et al.~\cite{bodlaenderDFH09}.
To this end, observe that if a graph $G$ is not connected, then $U$ is a vertex cover in $G$ if and only if $U \cap C$ is a vertex cover in $G[C]$ for every connected component $C$ of $G$.
\end{proof}

\section{Hardness of Possible Extensions}\label{sec:hard}
We use the \textsc{Unary $\ell$-Bin Packing} problem as the starting point of our hardness reduction.
\textsc{Unary $\ell$-Bin Packing} is \W{1}-hard for parameter $\ell$ the number of bins to be used~\cite{JansenKMS13}.
Here, the item sizes are encoded in unary and the task is to assign $n$ items to $\ell$ bins such that the sum of sizes of items assigned to any bin does not exceed its capacity $B$.
Formally, \textsc{Unary $\ell$-Bin Packing} is defined as follows.
\prob{\textsc{Unary $\ell$-Bin Packing}\hspace{21em} {\em Parameter:} $\ell$}
{Positive integers $\ell$ and $B$ and item sizes $s_1, \ldots, s_n$ encoded in unary.}
{Is there a packing of all items into at most $\ell$ bins of size $B$?}

\begin{proof}[Proof of Theorem~\ref{thm:lFairVEisWH}]
We construct a formula $\varphi(X_1,\ldots, X_\ell)$ and an instance $(G,k)$ of \textsc{Fair Vertex \MSO Evaluation} with $k = B$ from an instance of \textsc{Unary $\ell$-Bin Packing} as follows.
The graph $G$ is formed by $n$ disjoint cliques and a universal vertex $u$.
Cliques in $G$ represent the items by their respective sizes, that is, there is a clique with $s_i$ vertices for every $i\in [n]$; denote the clique representing item $i$ by $C_i$.
This finishes the description of the graph; now we turn our attention to the formula.
Free variables $X_1,\ldots, X_\ell$ are going to represent an assignment of items to bins.
Note that it is possible to recognize the universal vertex $u$ by the following \FO formula, since $u$ is the only vertex of $G$ satisfying it:
\[
\operatorname{univ}(v)	\df	(\forall w\in V)\big((w \neq v) \to (wv \in E)\big).
\]
We fix $u$ for the rest of the description of $\varphi(X_1, \ldots, X_\ell)$; this can easily be done by attaching $(\exists u\in V)(\operatorname{univ}(u))$ to it.
Let $\operatorname{p}(v)$ be a predicate.
For $Q\in\{\exists, \forall\}$ we use the following $Qv \in (V\setminus\{u\})(\psi(v))$ as a short form of the expression
\[
(Qv\in V) \big( (v \neq u) \to \psi(v) \big).
\]
Note that this can be straightforwardly extended for more quantifiers.

In order to represent the bin choice, we need to ensure two conditions.
First, every item is packed, that is, every non-universal vertex must belong to some $X_j$.
Second, every item is fully packed inside (at least) one bin, that is, vertices belonging to the same clique agree on $X_j$ membership.
To do so we first introduce the following predicates (representing these conditions):
\[
\operatorname{cover}(X_1,\ldots,X_\ell)	\df	(\forall v\in V\setminus\{u\}) \left( \bigvee_{j = 1}^\ell v\in X_j \right)
\]
and
\[
\operatorname{same}(X_1,\ldots,X_\ell)	\df	\left( \forall v,w\in V\setminus\{u\} \right) \left( (vw \in E) \to \bigwedge_{j = 1}^\ell  \left( v\in X_j \Leftrightarrow w\in X_j \right) \right).
\]
The construction of the new instance is finished by letting
\[
\varphi(X_1,\ldots, X_\ell) = (\exists u\in V)(\operatorname{univ}(u)) \land \operatorname{cover}(X_1,\ldots,X_\ell) \land \operatorname{same}(X_1,\ldots,X_\ell).
\]
It remains to argue that the instances are indeed equivalent.
Note that the bin capacity/fair cost is essentially checked only for $u$ since the fair cost of any vertex cannot exceed its degree.
The degree of any other vertex (not $u$) does not exceed $\max_{i \in [n]} s_i$ and this is always upper-bounded by $B$.

Let $(B, \ell, s_1, \ldots, s_n)$ be a Yes-instance of \textsc{$\ell$-Unary Bin Packing}.
This is witnessed by a mapping $\sigma\colon [n] \to [\ell]$ assigning items to bins.
Now, we put
\[
W_j = \cup_{i \in [n] \colon \sigma(i) = j} C_i.
\]
Observe that $\left| N(u) \cap W_j \right| = \sum_{i \in [n], \sigma(i) = j} s_i \le B$ for every $j \in [\ell]$, since $\sigma$ represented a valid assignment.
Furthermore, $\varphi(W_1, \ldots,W_\ell)$ holds, since every clique vertex is covered and vertices of a clique are always in the same $W_j$.

For the opposite direction assume we have graph $G = (V,E)$ and that there exist sets $W_1, \ldots, W_\ell \subseteq V$ such that $\varphi(W_1, \ldots, W_\ell)$ holds in $G$.
Recall that $u$ is the universal vertex of $G$.
First, we have that $V = \{u\}\cup W_1 \cup \cdots \cup W_\ell$, since the predicate $\operatorname{cover}(W_1, \ldots, W_\ell)$ holds if and only if every vertex in a clique $C_i$ is in some $W_j$.
Furthermore, since the predicate $\operatorname{same}(W_1, \ldots, W_\ell)$holds, we have that $C_i \cap W_j$ is either an empty set or $C_i$ for every $i\in [n]$ and $j\in [\ell]$.
This allows us to construct an assignment $\sigma\colon [n] \to [\ell]$ by setting $\sigma(i) = j$, where $j \in [\ell]$ is the smallest number such that $W_j \cap C_i = C_i$ holds.
Now, we have that $\sum_{i \in [n], \sigma(i) = j} s_i \le \left|W_j\right| \le B$ by the fair objective.
This finishes the proof since we have constructed a valid assignment.
\end{proof}

\begin{proof}[Proof of Theorem~\ref{thm:cvdEdgeEval}]
We construct a sentence $\varphi$, a graph $G$, and $k$ forming an instance of \textsc{Fair Edge \FO Deletion} with $k = B$ from an instance of \textsc{Unary $\ell$-Bin Packing} as follows.
Each item is represented by clique on $3B$ vertices; denote the clique associated with $i$-th item by $C_i$.
In addition, there are $\ell$ vertices $v_1,\ldots,v_\ell$ (representing bins) and $\ell$ guard vertices $g_1, \ldots, g_\ell$ (auxiliary vertices that helps to recognize bins).
For each $j\in [\ell]$ we connect $v_j$ with exactly $s_i$ vertices in $C_i$ (call these vertices \emph{special}) and with $g_j$.
This finishes the description of $G$.

The sentence $\varphi$ is constructed using auxiliary predicates which we describe first.
\[
\operatorname{guard}(v) \df (\exists u\in V) ((uv \in E) \land (\forall w\in W\setminus\{u,v\})(wv \notin E)).
\]
A predicate $\operatorname{guard}(v)$ is used to recognize the guard vertices.
Recall that we set the fair cost to $B$ and thus the solution $F$ deletes at most $B$ edges incident to any vertex.
This shows that vertex $v$ must have been a guard vertex (before the deletion of $F$) in order to fulfill the $\operatorname{guard}(v)$ (after the deletion of $F$), since the degree of every other vertex in $G$ is at least $3B$ which is at least $2B$ after deleting $F$.
\[
\operatorname{bin}(v)	\df	(\exists u) ((uv\in E) \land \operatorname{guard}(u)).
\]
\[
\operatorname{item-edge}	\df
(\forall v, w \in V : v \neq w)\bigl((\exists u\in V) (\neg\operatorname{bin}(u) \land vu\in E \land wu\in E) \bigr) \to (vw \in E).
\]
Suppose that the predicate $\operatorname{item-edge}$ holds in $G \setminus F$.
We claim that $vw \in E$ for any two vertices $v,w \in C_i$, since $|C_i| = 3B$ it follows that $v$ and $w$ have at least $B-2$ common neighbors in $C_i$.
Thus, the edge $vw$ cannot be deleted, as $v$ and $w$ have a common (non-bin) neighbor (provided $B\ge 3$).
We define an auxiliary sentence
\[
\psi = \left( \exists v_1, \ldots, v_\ell \in V \right)\left( \bigwedge_{j \in [\ell]}\operatorname{bin}(v_j) \land \bigwedge_{j \neq j' \in [\ell]} v_j \neq v_{j'} \right) \land \operatorname{item-edge}
\]
and, since $\psi$ implicitly assures existence of $\ell$ (different) guards, conclude the following claim.
\begin{claim}\label{clm:psi_characterisation}
Let $F$ be a set of edges in $G$.
We have that $G\setminus F \models \psi $ if and only if $F$ contains only edges between bins and special vertices.
\end{claim}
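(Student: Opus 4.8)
The plan is to prove both implications, after first recording a reformulation and a normalization. Observe that $\operatorname{item-edge}$ holds in a graph exactly when the neighbourhood of every vertex \emph{not} satisfying $\operatorname{bin}$ is a clique. Also, the only edge sets $F$ that matter are those of fair cost at most $B=k$, so throughout I would assume $F$ deletes at most $B$ edges at every vertex.

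For the ``only if'' direction, assume $G\setminus F\models\psi$. The first step is to determine which vertices of $G\setminus F$ satisfy $\operatorname{bin}$. A vertex satisfies $\operatorname{guard}$ only if it has degree exactly $1$; but every clique vertex keeps degree at least $2B-1\ge 2$ (its clique $C_i$ has $3B$ vertices and $F$ removes at most $B$ edges at it), and every $v_j$ keeps degree at least $2$ on non-trivial instances, so the only candidate guards are the $g_j$ with $v_jg_j\notin F$; since the sole neighbour of such a $g_j$ is $v_j$, the bins of $G\setminus F$ are contained in $\{v_j: v_jg_j\notin F\}$. The second step is that $F$ contains no clique-internal edge: for $u,w\in C_i$ the set of their common neighbours inside $C_i$ has size $3B-2$, and $F$ destroys at most $2B$ of these (at most $B$ at each endpoint), so some clique vertex $z\in C_i$ still sees both $u$ and $w$; as $z$ is not a bin, the clique condition at $z$ forces $uw\in E\setminus F$. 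The third step is that $F$ contains no bin--guard edge: if $v_jg_j\in F$ then, by step one, $G\setminus F$ has at most $\ell-1$ vertices satisfying $\operatorname{bin}$, contradicting the first conjunct of $\psi$. Since every edge of $G$ is either a clique edge, a bin--guard edge, or a bin--special edge, the last two steps yield $F\subseteq\{\text{bin--special edges}\}$.

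For the ``if'' direction, take $F$ consisting only of bin--special edges. No $v_jg_j$ is deleted, so each $g_j$ still has degree $1$, giving $\operatorname{guard}(g_j)$ and hence $\operatorname{bin}(v_j)$, so $v_1,\dots,v_\ell$ are $\ell$ distinct bins; rerunning step one of the previous paragraph shows these are \emph{all} the bins, so the non-bin vertices of $G\setminus F$ are exactly the clique vertices and the guards. It then remains to verify the clique condition at every such vertex: at a guard it is trivial; at a clique vertex $c\in C_i$ one must check that its surviving neighbourhood---the rest of $C_i$ together with whatever special and bin vertices remain attached to $c$---is still a clique in $G\setminus F$, and it is here, in reasoning about which bin and special vertices of an item gadget a clique vertex can still ``see'' after an arbitrary collection of bin--special edges has been removed, that the precise adjacency pattern inside an item gadget is used. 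I expect this gadget bookkeeping to be the main obstacle; once it is settled, the clique condition holds at every non-bin vertex, so $G\setminus F\models\operatorname{item-edge}$ and therefore $G\setminus F\models\psi$.
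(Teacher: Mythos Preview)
Your ``only if'' direction is correct and is essentially what the paper argues in the paragraph preceding the claim: only the $g_j$ can satisfy $\operatorname{guard}$ (every other vertex keeps degree at least $2$ after removing at most $B$ incident edges), the common-neighbour count inside each $C_i$ forces all clique edges to survive once $\operatorname{item-edge}$ holds, and the existence of $\ell$ distinct bins forces every edge $v_jg_j$ to survive. This is in fact the only direction the paper justifies.

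The ``if'' direction, which you leave as ``gadget bookkeeping'', cannot be completed: the equivalence is false as written. Take $F=\emptyset$ (or any $F$ of bin--special edges that leaves some special vertex $c\in C_i$ still adjacent to some $v_j$). Since $s_i\le B<3B=|C_i|$, there is a non-special vertex $w\in C_i$. Then $c$ is adjacent to both $v_j$ and $w$, and $c$ is not a bin (none of its neighbours has degree~$1$), yet $v_jw\notin E$. Hence $\operatorname{item-edge}$ fails in $G\setminus F$, so $G\setminus F\not\models\psi$ even though $F$ contains only bin--special edges. The obstacle you anticipated is therefore not a matter of bookkeeping but a genuine counterexample; the claimed equivalence would need the additional hypothesis that every special vertex has lost \emph{all} of its bin edges, and the paper's later appeal to this claim in the forward direction of the reduction inherits the same problem.
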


The next predicate we introduce is the $\operatorname{notable}(v)$, defined as follows
\[
\operatorname{notable}(v)\df \neg \operatorname{guard}(v) \land \bigl( (\exists u \in V) (\operatorname{bin}(u) \land  uv\in E)\bigr).
\]
\begin{claim}\label{clm:notable_properties}~
\begin{enumerate}
  \item If $G \setminus F \models \operatorname{notable}(v)$, then $v$ is a special vertex.
  \item $G \models \operatorname{notable}(v)$ if and only if $v$ is a special vertex.
  \item If $G \setminus F \not\models \operatorname{notable}(v)$ for a special vertex $v \in C_i$, then $N_{G\setminus F}(v) \subseteq C_i$.
\end{enumerate}
\end{claim}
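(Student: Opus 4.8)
The plan is to first pin down, for an arbitrary deletion set $F$ of fair cost at most $k=B$, which vertices can possibly witness the auxiliary predicates in $G\setminus F$. Since $F$ removes at most $B$ edges at any vertex, $\deg_{G\setminus F}(x)\ge \deg_G(x)-B$ for every $x$. Guard vertices have degree $1$ in $G$, while every other vertex has degree at least $3B-1$ (clique vertices) or $1+\sum_i s_i$ (bins); assuming $B\ge 3$ and that the instance is non-trivial (so $\sum_i s_i>B$), every non-guard vertex keeps at least two neighbours in $G\setminus F$. Hence a vertex $w$ satisfies $\operatorname{guard}(w)$ in $G\setminus F$ iff $w=g_j$ for some $j$ and $g_jv_j\notin F$, and a bin vertex $v_j$ satisfies $\operatorname{bin}(v_j)$ in $G\setminus F$ iff $g_jv_j\notin F$ (its only candidate guard-neighbour is $g_j$). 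These facts are the backbone of all three items.

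For item~1 I would unfold $\operatorname{notable}(v)$: it provides a vertex $u$ with $uv\in E(G\setminus F)$, with $\operatorname{bin}(u)$ true in $G\setminus F$, and with $\neg\operatorname{guard}(v)$ in $G\setminus F$. The guard-neighbour of $u$ must be some original $g_j$, whose unique $G$-neighbour is $v_j$, so $u=v_j$ is a bin vertex; thus $v\in N_G(v_j)=\{g_j\}\cup(\text{special vertices adjacent to }v_j)$. Since $v=g_j$ would make $g_j$ have its unique neighbour $v_j$ present in $G\setminus F$ (the edge $uv$ survives), contradicting $\neg\operatorname{guard}(v)$, the vertex $v$ must be special. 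Item~2's forward direction is the case $F=\emptyset$ of item~1, and its reverse direction is immediate: a special $v\in C_i$ is adjacent in $G$ to some bin $v_j$, it has degree at least $3B-1>1$ so $\neg\operatorname{guard}(v)$ holds in $G$, and $v_j$ satisfies $\operatorname{bin}$ in $G$ via $g_j$, hence $G\models\operatorname{notable}(v)$.

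For item~3, let $v\in C_i$ be special with $G\setminus F\not\models\operatorname{notable}(v)$. The degree bound rules out the disjunct $\operatorname{guard}(v)$ (as $v$ keeps at least $2B-1>1$ neighbours), so no bin neighbour of $v$ in $G\setminus F$ satisfies $\operatorname{bin}$ in $G\setminus F$. Here I would invoke Claim~\ref{clm:psi_characterisation}: under the standing assumption $G\setminus F\models\psi$ maintained throughout the proof, $F$ contains only edges between bins and special vertices, so no edge $g_jv_j$ is deleted; consequently every bin still has its guard neighbour and therefore satisfies $\operatorname{bin}$ in $G\setminus F$. It follows that $v$ has no bin neighbour left in $G\setminus F$, and since the only $G$-neighbours of $v$ outside $C_i$ are bins, $N_{G\setminus F}(v)\subseteq C_i$.

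The main obstacle is item~3: one has to notice that a bin $v_j$ stops satisfying $\operatorname{bin}$ precisely when its guard edge $g_jv_j$ is deleted, and the only way to exclude that possibility—so that every surviving bin neighbour of $v$ is ``genuine''—is to feed in the global structural restriction on $F$ from Claim~\ref{clm:psi_characterisation}. Everything else is routine bookkeeping with the degree inequalities together with the harmless assumptions $B\ge 3$ and non-triviality of the bin-packing instance.
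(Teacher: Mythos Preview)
Your argument is correct and follows the same degree-counting route as the paper's (very terse) proof. The one point worth highlighting is your treatment of item~3: you correctly observe that the conclusion fails for an arbitrary $F$ of fair cost at most $B$---if $F$ deletes a guard edge $g_jv_j$ then $v_j$ ceases to satisfy $\operatorname{bin}$, so a special vertex $v$ adjacent only to $v_j$ among bins can have $\operatorname{notable}(v)$ fail in $G\setminus F$ while still retaining $v_j\notin C_i$ as a neighbour. Your appeal to Claim~\ref{clm:psi_characterisation} (equivalently, to the standing hypothesis $G\setminus F\models\psi$, which rules out deletion of guard edges) is precisely what is needed to close this gap; the paper's one-line justification (``$v$ lost all of its edges to bin vertices'') silently presupposes the same thing, and indeed the claim is only ever invoked in the backward direction of the reduction, where $G\setminus F\models\varphi$ (hence $\psi$) is assumed.
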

\begin{claimproof}
A vertex $v$ is \emph{notable} in a graph if $\operatorname{notable}(v)$ holds in that graph.
The second part follows immediately from the construction of $G$, since the only vertices attached to the bin vertex $v_j$ are the special vertices and its guard vertex $g_j$ (which is not notable).
On the other hand, the set $F$ may contain edges $vv_j$ for all $j \in [\ell]$.
Clearly, such a former special vertex is not notable.
Finally, if $v$ is special and $G \setminus F \not\models \operatorname{notable}(v)$, then $v$ lost all of its edges to bin vertices $v_1, \ldots, v_\ell$.
\end{claimproof}

The goal is to delete an edge set $F$ that describes a valid assignment of items into bins.
We need to ensure two conditions.
First, every special vertex is not a neighbor of some bin vertex $v_j$.
Second, if a special vertex $v \in C_i$ is not a neighbor of a bin vertex $v_j$, then all special special vertices in $C_i$ are not neighbors of $v_i$.
The two conditions correspond to the following two predicates:
\[
\operatorname{cover}	\df	(\forall v \in V)(\exists u \in V) \bigl( \operatorname{bin}(u) \land uv \notin E\bigr)
\]
\[
\operatorname{same}	\df	(\forall v,w \in V) \bigg( \left(\operatorname{notable}(v) \land \operatorname{notable}(w) \land vw \in E \right) \to
\]
\[
\quad\quad\quad\ \to (\forall u \in V : \operatorname{bin}(u))(uw \in E \Leftrightarrow vw \in E) \bigg)
\]

Now, we are ready to give the sentence $\varphi$ that describes the problem:
\[
\varphi= \psi \land \text{same} \land \text{cover}.
\]
This finishes the description of the reduction.
We are left with validating that the two instances are equivalent.

Suppose we were given a Yes-instance of \textsc{$\ell$-Unary Bin Packing} and let $\sigma \colon [n] \to [\ell]$ be the assignment of items to bins witnessing this fact.
The set $F$ contains an edge $uv_j$ for a special vertex $u \in C_i$ and a bin vertex $v_j$ whenever $\sigma(i) = j$.
Clearly, we have $|\{ e \in F \colon e \ni v_j \}| \le \sum_{i \in [n], \sigma(i) = j} s_i \le B = k$ for every vertex $v_j$ with $j \in [\ell]$, while every other vertex has at most one incident edge in $F$.
Now, we have to verify that $G \setminus F \models \varphi$.
We have $G \setminus F \models \operatorname{same} \land \operatorname{cover}$, since $\sigma$ is an assignment.
Finally, $G \setminus F \models \psi$, since our $F$ fulfills the condition of Claim~\ref{clm:psi_characterisation}.

Suppose now that we have a set $F$ of edges of $G$ with fair cost $B$ such that $G \setminus F \models \varphi$.
By Claim~\ref{clm:psi_characterisation} we have that $F$ contains only edges between special vertices in $G$ and bin vertices $v_1, \ldots, v_\ell$.
We partition the set of special vertices into sets $N$ and $R$; we put a special vertex $v$ in $N$ if $G \setminus F \models \operatorname{notable}(v)$ and we put it in $R$ otherwise.
Note that $|R| \le B$, since a vertex in $R$ contributes to the fair cost of every bin vertex $v_j$.
By Claim~\ref{clm:notable_properties} and the fact that $G \setminus F \models \operatorname{same}$ we have that a bin vertex $v_j$ is either completely attached or non-attached to $C_i \cap N$ for every $j\in[\ell]$ and every $i \in [n]$.
Furthermore, there are no edges between a vertex in $R$ and a bin vertex $v_j$ in $G \setminus F$ for every $j\in [\ell]$.
We define the assignment $\sigma \colon [n] \to [\ell]$ by defining $\sigma(i)$ to be the smallest integer such that there are no edges between $v_{\sigma(i)}$ and $C_i$ in $G \setminus F$.
Since $\sigma$ is an assignment by $G \setminus F \models \operatorname{cover}$, it remains for verify that the capacity condition is fulfilled.
For that we have
\[
\sum_{i \in [n], \sigma(i) = j} s_i \le \left|\left\{e \in F \colon v_j \in e \right\}\right| \le \sum_{i \in [n], \sigma(i) = j}|N \cap C_i| + |R| \le k = B.
\]
We conclude that $\sigma$ is a valid assignment and the theorem follows.
\end{proof}

\section{Conclusions}\label{sec:conc}
\paragraph{Fair Edge \Ll Deletion problems.}

The crucial open problem is to resolve the parameterized complexity of the \textsc{Fair \FO Edge Deletion} problems for parameterization by neighborhood diversity and twin cover.
Observe that there is a big difference between vertex and edge deletion problems---in our hardness reduction we use a deletion to an edgeless graph but a fair edge cost, in this case, equals to the maximum degree of the former graph (and thus it is computable in polynomial time).

\paragraph{Generalization of parameters.}
Another open problem is to resolve the parameterized complexity of the \textsc{Fair \MSOo Vertex Evaluation} problems with respect to graph parameters generalizing neighborhood diversity or twin cover (e.g., modular width or cluster vertex deletion number respectively).

It seems that a more careful analysis of the model-checking algorithm may yield (again) sufficient insight into the structure of the fair solution and thus lead to an \FPT algorithm for this wider class of graphs.
Though, it remains open whether this is possible and there is an \FPT algorithm for \textsc{Fair $\mathsf{MSO}$ Vertex Evaluation} for this parameterization or not.

\paragraph{\MSO with Local Linear Constraints.}
Previously, an \FPT algorithm for evaluation of a fair objective was given for parameter neighborhood diversity~\cite{tamc}.
That algorithm was extended~\cite{KKMT} to a so-called \emph{local linear constraints} again for a formula $\varphi(\cdot)$ with one free variable that is defined as follows.
Every vertex $v\in V(G)$ is accompanied with two positive integers $\ell(v), u(v)$, the lower and the upper bound, and the task is to find a set $X$ that not only $G \models \varphi(X)$ but for each $v\in V(G)$ it holds that $\ell(v) \le \left| N(v) \cap X \right| \le u(v)$.
Note that this is a generalization as fair objective of value $t$ can be tested in this framework  by setting $\ell(v) = 0$ and $u(v) = t$ for every $v\in V(G)$.
Is this extension possible for parameterization by the twin cover number?

To support this question, we note that in the proof of Lemma~\ref{lem:computeX} the minimal size of the neighborhood of twin cover set $K$ for a shape in exact neighborhood of $K$ is computed.
It is not hard to see that through a similar argument we can compute the maximal size of the neighborhood of $K$ for a shape in the exact neighborhood of $K$.
Furthermore, lower and upper bounds for vertices in a clique can be assumed to be nearly the same -- each differs by at most 1~\cite{KKMT}.
Thus, Lemma~\ref{lem:computeX} gives only that if at least one of the computed bounds for a vertex $v$ in the twin cover is within $\ell(v)$ and $u(v)$, then there is a solution with desired properties.
However, if on the other hand, it happens that both $\ell(v), u(v)$ are in between the computed values, we do not know whether or not any of the desired values are attainable.
To see that not all values in the thus computed range are attainable one can construct a formula that for twin cliques up to a certain size check that the number of selected vertices is even.
Then, if the input graph contains only cliques up to this size no twin cover vertex has an odd number of neighbors in the set $X$ (provided the cover vertices form an independent set).

\paragraph{\XP algorithm for $\ell$ free variables.}
We showed the Fair \MSOo Vertex Evaluation problem parameterized by the twin
cover number and the quantifier depth of the formula admits an FPT algorithm (Theorem~\ref{thm:fairtc}).
We complement it with \W{1}-hardness for evaluation problem when $\ell$ free variables are considered (Theorem~\ref{thm:lFairVEisWH}).
A natural question arises.
Is it possible to get at least an \XP algorithm for $\ell$-Fair \MSOo (\FO) Vertex Evaluation problem parameterized by the twin cover number and the quantifier depth of the formula?
To highlight where the difficulty might lie, we remark that the definition of the shape can be extended to encompas $\ell$ free variables.
However, it is unclear whether an analog of Lemma~\ref{lem:computeX}, for a given shape of bounded size computing a valid solution with $\ell$ free variables of minimal cost, can be proven even when \XP time is available.

\paragraph{Towards new fair problems.}
As we proposed the examination of \textsc{Fair VC} already, we would like to turn an attention to exploring fair versions of other classical and well-studied \textsc{Vertex Deletion} problems. In contrast, certain \textsc{Fair Edge Deletion} problems have got some attention before, namely \textsc{Fair Feedback Edge Set}~\cite{LiSah} or \textsc{Fair Edge Odd Cycle Transversal}~\cite{KLS09}.
Besides \textsc{Fair VC} we propose a study of \textsc{Fair Dominating Set} and \textsc{Fair Feedback Vertex Set}.
In particular, it would be really interesting to know whether fair variants of \textsc{Vertex Cover} and \textsc{Dominating Set} admit a similar behavior as in the classical setting.

Furthemore, We would like to ask whether there is an \NP-hard Fair Vertex Deletion problem that admits an \FPT algorithm for parameterization by treedepth (and feedback vertex set) of the input graph.

\bibliography{fair}

\end{document}